\newtheorem{theorem}{Theorem}
\newtheorem{corollary}[theorem]{Corollary}
\newtheorem{proposition}[theorem]{Proposition}
\newtheorem{lemma}[theorem]{Lemma}
\title{Minimizing Visible Edges in Polyhedra\footnote{A two-page extended abstract of this paper appeared in the \emph{Abstracts of the 23rd Thailand-Japan Conference on Discrete and Computational Geometry, Graphs, and Games (TJCDCGGG)}, pp.~70--71, Chiang Mai, 2021.}}
\author{Csaba D. T\'oth\thanks{Department of Mathematics, California State University Northridge, Los Angeles, CA; Department of Computer Science, Tufts University, Medford, MA, USA. {\tt csaba.toth@csun.edu}}
   \and Jorge Urrutia\thanks{Instituto de Matem\'aticas, Universidad Nacional Aut\'onoma de M\'exico, Mexico City, Mexico. {\tt urrutia@matem.unam.mx}}
   \and Giovanni Viglietta\thanks{Department of Computer Science and Engineering, University of Aizu, Tsuruga, Ichimi-cho, 965-8580 Aizuwakamatsu-shi, Fukushima, Japan. {\tt viglietta@gmail.com}}
}
\date{}
\begin{document}
\thispagestyle{empty}
\maketitle

%\linenumbers

\begin{abstract}
We prove that, given a polyhedron $\mathcal P$ in $\mathbb{R}^3$, every point in $\mathbb R^3$ that does not see any vertex of $\mathcal P$ must see eight or more edges of $\mathcal P$, and this bound is tight. More generally, this remains true if $\mathcal P$ is any finite arrangement of internally disjoint polygons in $\mathbb{R}^3$. We also prove that every point in $\mathbb{R}^3$ can see six or more edges of $\mathcal{P}$ (possibly only the endpoints of some these edges) and every point in the interior of $\mathcal{P}$ can see a positive portion of at least six edges of $\mathcal{P}$. These bounds are also tight.
\end{abstract}

{\footnotesize {\bf Keywords:} edge guard; polyhedron; visibility graph; spherical occlusion diagram}

%\Msc{52B05l 52B10}

%-----------------------------------%
\section{Introduction}\label{Sec: Intro}

Computer vision, with applications to motion planning, robotics, and machine learning, aims to arrange our physical environment into data structures. Three-dimensional solids are usually discretized, often represented by polyhedral surfaces, and one would like to compute the view from a given (possibly moving) viewpoint. While visibility problems are motivated by three-dimensional applications, most of the algorithmic, combinatorial, and geometric results pertain to the plane~\cite{OR17,urrutia2000art}. Indeed, some of the most basic properties of visibility in the plane do not easily generalize to three and higher dimensions. This paper address one such problem, related to the minimum number of edges of a polyhedron or polygonal scene visible from a viewpoint.

\smallskip\noindent\textbf{Planar versus three-dimensional visibility.}
Suppose that $s$ is a point in the interior of a simple polygon $P$ in the plane. It is easy to prove that $s$ sees at least three vertices and at least three edges of $P$, and both bounds are tight, e.g., when $P$ is a triangle. Indeed, a simple polygon has a triangulation, the point $s$ lies in some triangle $T$ (possibly on the boundary between two triangles), and so $s$ sees all three vertices of $T$, which are vertices of $P$. Consider the edges of $P$ that are at least partially visible to $s$, and project them orthographically onto a circle centered at $s$. Each edge projects to a circular arc strictly shorter than a semicircle, and the arcs jointly cover the entire circle; consequently, at least three edges are visible to $s$. Similarly, suppose $L$ is an arrangement of $n\geq 3$ pairwise noncrossing line segments in the plane, not all in a line, and $s$ is a point in the convex hull of $L$ disjoint from all segments. Then it is easy to prove that $s$ sees at least three segment endpoints, but possibly only two of the segments.

Analogous statements for polyhedra and for arrangements of interior-disjoint polygons in $\mathbb{R}^3$ are not so straightforward. There are well-known constructions~\cite{Cha84,o1987art} for nonconvex polyhedra $\mathcal{P}$ and viewpoints $s$ in $\mathbb{R}^3$ such that $s$ does not see any vertex of $\mathcal{P}$, even if $\mathcal{P}$ is contractible or orthogonal; see \cref{fig:2}(center) for an example. It is not difficult to show that $s$ sees at least three edges of $\mathcal{P}$: In a cross-section of $\mathcal{P}$ with any plane $\alpha$ containing $s$ but in general position with respect to $\mathcal{P}$, the point $s$ sees at least three vertices, which correspond to edges of $\mathcal{P}$. However, this bound is not tight: We improve it to six (\cref{thm:preli1}), and further to eight when $s$ does not see any vertices of $\mathcal{P}$ (\cref{thm:main1}).

\subsection{Main Results}
For clarity, we define polyhedra and visibility in three-space.
\begin{itemize}
\item A \emph{polygon in the plane} is a connected compact 2-dimensional manifold with boundary such that its boundary is the union of finitely many line segments. A \emph{polygon in $\mathbb{R}^3$} is a polygon in some affine plane in $\mathbb{R}^3$. A polygon is \emph{simple} if it is contractible (or, equivalently if its boundary is connected).
\item A \emph{polyhedron} in $\mathbb{R}^3$ is a connected compact 3-dimensional manifold with boundary such that its boundary is the union of finitely many polygons (which are the \emph{facets} of the polyhedron); see~\cref{fig:1:1} for an example.
\end{itemize}

Given a polyhedron $\mathcal{P}$, two points $p,q\in \mathbb{R}^3$ mutually \emph{see} each other if the closed line segment $pq$ is disjoint from the interior of $\mathcal{P}$, or it is disjoint from the exterior of $\mathcal{P}$. Note that if $p$ is in the interior of $\mathcal{P}$, and $q$ in its exterior, then they are not visible to one other; however, according to our definition, if $p$ and $q$ see each other, then $p$ or $q$ could be on the boundary of $\mathcal{P}$, and the segment $pq$ may overlap with a facet of $\mathcal{P}$ or graze some edges of $\mathcal{P}$.

For visibility between a point and an edge, we adopt the notion of weak visibility: A point $p\in \mathbb{R}^3$ \emph{sees} a line segment $s\subset \mathbb{R}^3$ if there exists a point $q\in s$ such that $p$ sees $q$. In particular, a point $p$ sees an edge $e$ of $\mathcal{P}$ if $p$ sees a vertex of $e$.
We also consider a stronger notion: A point $p\in \mathbb{R}^3$ \emph{sees a positive portion} of a line segment $s\subset \mathbb{R}^3$ if there exists a subsegment $s'\subset s$ of positive length such that $p$ sees every point in $s'$. In our results, $p$ may be any point in $\mathbb{R}^3$, i.e., it may be in the interior or exterior of $\mathcal{P}$, or possibly on the boundary of $\mathcal{P}$.

\begin{theorem}\label{thm:preli1}
Let $\mathcal P$ be a polyhedron in $\mathbb{R}^3$. Then every point $p\in \mathbb{R}^3$ sees at least six edges of $\mathcal P$, and this bound is tight.
\end{theorem}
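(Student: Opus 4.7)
The plan is to analyze the spherical occlusion diagram at $p$ and apply Euler's formula on $S^2$. Using a standard perturbation/limit argument, I would first reduce to the case in which $p$ lies in general position with respect to $\mathcal{P}$ (not on $\partial\mathcal{P}$, no three vertices of $\mathcal{P}$ collinear with $p$, and no coincidences among edge and vertex projections from $p$ apart from generic T-junctions); the boundary and degenerate cases then follow because a sufficiently small perturbation of $p$ cannot decrease the number of edges weakly visible from $p$.

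Place a sphere $S$ centered at $p$ and define the cell complex $D$ on $S$ whose open $2$-cells are the connected components of directions along which the ray from $p$ first meets the interior of a single facet of $\mathcal{P}$ (with one additional ``infinity'' face of escaping rays when $p\notin\mathcal{P}$); whose $1$-cells are great-circle sub-arcs arising as radial projections of visible sub-segments of edges of $\mathcal{P}$; and whose $0$-cells are projections of visible vertices of $\mathcal{P}$ together with T-junctions where one edge is occluded by another. General position ensures that $D$ has no spherical digons, so every face is bounded by at least three arcs, and every vertex of $D$ has degree at least three, because a projected polyhedral vertex inherits its polyhedral valence ($\geq 3$) while a T-junction has degree exactly three. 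Writing $v,e,f$ for the respective counts, the bounds $2e\ge 3v$ and $2e\ge 3f$, combined with Euler's formula $v-e+f=2$, yield $e\ge 6$.

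The main obstacle I anticipate is promoting this bound of six arcs in $D$ to a bound of six \emph{distinct} visible edges of $\mathcal{P}$, because a single edge of a non-convex polyhedron may be split by occluders into several visible sub-segments, each contributing its own arc to $D$. I plan to close this gap by exploiting the fact that whenever two arcs of $D$ lie on the same great circle $\gamma_{e}$, the gap between them on $\gamma_{e}$ must be spanned in $D$ by a $2$-cell belonging to a distinct facet $F$ that occludes $e$ there; the boundary of $F$ in $D$ consequently contributes further arcs coming from edges other than $e$, and a careful bookkeeping of these forced companion arcs should upgrade the inequality $e\ge 6$ to the statement that at least six distinct edges of $\mathcal{P}$ are visible from $p$.

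For tightness, any tetrahedron $T$ together with any point $p\in\mathbb{R}^3$ works: $T$ has only six edges, and if $p$ lies in the interior of $T$ all four vertices of $T$ are visible, while if $p$ lies in the exterior of $T$ the silhouette is a triangle whose three vertices are visible and behind which the fourth vertex is hidden; in either case every edge of $T$ has at least one visible endpoint, so $p$ sees exactly six edges of $T$.
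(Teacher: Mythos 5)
Your central Euler-formula step is unsound, and the problem is visible already in the tightness example you give at the end. Take a tetrahedron and a point $p$ just outside the centroid of a face $F$ with vertices $A,B,C$ and apex $D$. The only edge sub-segments of positive length visible to $p$ are the three edges of $F$, so your complex $D$ has $v=3$ projected vertices, $e=3$ arcs, and $f=2$ faces. Each projected vertex has degree $2$, not $3$: the third polyhedral edge at $A$ (namely $AD$) is hidden behind $F$ except for the single point $A$, so it contributes no $1$-cell. Hence the inequality $2e\ge 3v$ fails ($6<9$), and your claim that ``a projected polyhedral vertex inherits its polyhedral valence'' is false; the polyhedral valence is at least $3$, but the number of incident edges with a \emph{visible positive portion} near the vertex can be $2$. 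Worse, the intermediate conclusion ``$e\ge 6$ arcs'' is simply not true (here $e=3$), and it cannot be true, because Theorem~\ref{thm:preli2} shows an exterior point may see positive portions of only three edges. The count of six in Theorem~\ref{thm:preli1} is only reached by also counting edges that are visible solely at one endpoint, and those edges contribute nothing to your complex $D$. This is not a degeneracy you can perturb away: the configuration above is stable under small motions of $p$. The paper's proof deals with exactly this by attaching a \emph{degenerate segment} to every edge visible only at a vertex, extending its projection along a great circle until it hits another arc, and only then counting; the counting itself is done via a contact graph (a simple planar digraph with out-degree $2$, forcing at least $6$ nodes) rather than via face/vertex degree bounds in the arrangement.

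Your third paragraph correctly identifies the other gap (one edge may contribute several arcs), but the proposed fix is only a sketch; the paper resolves it with a separate, self-contained argument (Lemma~\ref{l:6edge2}): if $E_{e,p}$ has two components, a cross-section of $\mathcal{P}$ by the plane through $p$ and $e$ already exhibits five visible edges besides $e$, so six distinct edges are seen and the arc count is no longer needed for that case. Your perturbation reduction and your tightness example are fine (semicontinuity of weak visibility does let you pass to the limit over a sequence of generic viewpoints, since only finitely many edges exist), but the core counting argument needs to be rebuilt so that vertex-only-visible edges are represented in whatever structure you count.
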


The bound in \cref{thm:preli1} is attained, for example, for a tetrahedron $\mathcal{P}$, which has six edges. In this case, a point $p$ in the exterior of $\mathcal{P}$ but close to the center of a face $F$ sees positive portions of all three edges of $F$, but it sees only an endpoint of the remaining three edges. If we insist on seeing positive portions of edges, we have the following result.

\begin{theorem}\label{thm:preli2}
Let $\mathcal P$ be a polyhedron in $\mathbb{R}^3$. Every point in $\mathcal{P}$ sees positive portions of at least six distinct edges of $\mathcal{P}$; every point in the exterior of $\mathcal{P}$ sees positive portions of at least three distinct edges of $\mathcal{P}$. Both bounds are tight.
\end{theorem}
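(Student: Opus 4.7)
My plan is to prove both parts of the theorem via the \emph{spherical occlusion diagram} of $p$. Let $\Sigma$ be the unit sphere of directions around $p$. The diagram $\mathcal{A}$ is the cellular decomposition of $\Sigma$ (or of the shadow region of $\mathcal{P}$ in $\Sigma$, when $p$ is exterior) whose 2-cells are projections of connected visible portions of the facets of $\mathcal{P}$, whose 1-cells are great-circle arcs obtained by projecting visible subsegments of polyhedron edges (either directly visible edges or silhouette edges that cut off a farther facet), and whose 0-cells are projections of visible polyhedron vertices and T-junctions. Every 1-cell of $\mathcal{A}$ corresponds to a polyhedron edge of which $p$ sees a positive portion, so the task reduces to bounding the number $n$ of distinct polyhedron edges that contribute arcs.

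For the \emph{interior} case, $\mathcal{A}$ tiles $\Sigma$ completely. Every 0-cell has degree at least $3$ and every 2-cell is bounded by at least $3$ arcs, so Euler's formula $V-E+F=2$ yields $F \geq 4$. I would then establish two refinements. First, each 2-cell is bounded by arcs coming from at least three \emph{distinct} polyhedron edges: pulling back a 2-cell $f$ to its underlying facet produces a simply connected planar polygon $R_f$ with at least three sides, and two sides of $R_f$ on the same line would force two vertices of $R_f$ to coincide at a single intersection point, which is impossible for a nondegenerate polygon; hence $R_f$ has at least three sides on at least three distinct lines, each a distinct polyhedron edge. Second, each polyhedron edge borders at most two 2-cells of $\mathcal{A}$, because just off its great-circle arc on either side $p$ sees exactly one of the two facets of $\mathcal{P}$ incident to that edge. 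Double counting incidences $(\epsilon, f)$ with $\epsilon$ on $\partial f$ then yields $3F \leq 2n$, hence $n \geq 6$.

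For the \emph{exterior} case, I would work with the silhouette of $\mathcal{P}$ from $p$, that is, the topological boundary in $\Sigma$ of the projection of $\mathcal{P}$. This silhouette decomposes into a union of simple closed curves on $\Sigma$, each a spherical polygon whose sides are great-circle arcs projected from polyhedron edges. A bounded polyhedron edge projects to an arc strictly shorter than a semicircle, so no closed spherical polygon made of such arcs can have fewer than three sides; by the same nondegeneracy argument as in the interior case, these sides come from at least three distinct polyhedron edges. Each silhouette arc has positive length and lies along the visible boundary, so the corresponding polyhedron edges are seen by $p$ in positive portions, giving $n \geq 3$.

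The main obstacle is the second refinement in the interior argument: at a T-junction along a polyhedron edge's arc, the facet visible just off one side of the arc can switch from one connected visible component to another, so the edge may in fact border more than two 2-cells of $\mathcal{A}$. I expect this to be handled by an amortized count in which the additional face incidences contributed at T-junctions are offset against the corresponding T-junction terms in Euler's formula, preserving the final bound $n \geq 6$; a careful enumeration of cases or a small perturbation of $p$ may be needed to rule out pathological configurations. Tightness of both bounds is witnessed by a regular tetrahedron: an interior point sees positive portions of all six edges, while a point just outside the centroid of a face sees positive portions of exactly the three edges of that face.
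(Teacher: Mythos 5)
Your overall strategy (project visible edge portions onto a sphere around $p$ and count via a planarity argument) is the same as the paper's, and your exterior-case argument and tightness examples match the paper's almost verbatim. However, the interior case as written has a genuine gap, and it is exactly the one you flag yourself: the double-counting inequality $3F\leq 2n$ rests on the claim that each polyhedron edge borders at most two $2$-cells of the visibility map, and this claim is false. If the visible portion of an edge $e$ is disconnected (two positive-length subsegments of $e$ visible, separated by an occluded stretch), then $e$ contributes two separate arcs and can border four or more $2$-cells; even a single arc can border three or more $2$-cells when T-junctions subdivide the region on one of its sides. The ``amortized count'' you hope will absorb the excess incidences against T-junction terms in Euler's formula is not carried out, and it is not clear it can be: a second visible component of the same edge adds face incidences without adding any compensating vertices beyond those already counted. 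A second, smaller problem is the claim that every $0$-cell has degree at least $3$: a visible polyhedron vertex near which only one incident facet is locally visible projects to a point where exactly two arcs end, and such degree-$2$ vertices break the Euler-formula derivation of $F\geq 4$. (You also do not treat points on the boundary of $\mathcal{P}$, which the theorem includes.)

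The paper closes both holes by a different mechanism. First, it proves separately (Lemma~\ref{l:6edge2}) that if some edge $e$ has a disconnected visible set with two positive-length components, then a planar cross-section through $p$ and $e$ already exhibits six edges of which $p$ sees positive portions; this removes the multiply-visible-edge pathology entirely, so that it suffices to count visible \emph{segments} rather than edges. Second, instead of a face--edge incidence count, it bounds the number of segments via the contact graph: after extending arcs at degree-$2$ vertices (its set $Q$, handled with the side condition that no facet contributes three segments) and perturbing higher-degree endpoints into swirls, one gets a simple planar directed graph in which every vertex has out-degree $2$, and such a graph must have at least six vertices. If you want to salvage your Euler-formula route, you would at minimum need to prove a substitute for your second refinement, and the natural substitute is precisely the paper's Lemma~\ref{l:6edge2}; without it, the proof is incomplete.
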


Our main result pertains to points that do not see any vertices of the polyhedron.

\begin{theorem}\label{thm:main1}
Let $\mathcal P$ be a polyhedron in $\mathbb{R}^3$.
%Then every point $p\in \mathbb{R}^3$ that does not see any vertex of $\mathcal P$ sees positive portions of at least eight distinct edges of $\mathcal P$.
If a point $p\in \mathbb{R}^3$ does not see any vertex of $\mathcal P$, then it sees positive portions of at least eight distinct edges of $\mathcal P$.
The bound is tight.
\end{theorem}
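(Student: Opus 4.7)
My plan is to analyze the \emph{spherical occlusion diagram} (SOD) at $p$: fix a small sphere $S$ centered at $p$, and let $D$ be the plane graph on $S$ obtained by radially projecting the visible surface of $\mathcal{P}$ onto $S$. The edges of $D$ are the projections of the maximal visible subarcs of edges of $\mathcal{P}$ between consecutive apparent discontinuities on $S$, and the vertices of $D$ are those apparent discontinuities themselves.

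Under the hypothesis that $p$ sees no vertex of $\mathcal{P}$, every vertex of $D$ must be a \emph{T-junction}: a direction in which one edge of $\mathcal{P}$ (the \emph{occluder}) passes in front of another (the \emph{occluded}). Locally, two arcs from the occluder (the ``bar'') meet one arc from the visible endpoint of the occluded edge (the ``stem''), so $D$ is $3$-regular. Letting $V, E, F$ denote the numbers of vertices, edges, and faces of $D$, the handshake identity ($2E = 3V$) together with Euler's formula on $S$ ($V - E + F = 2$) gives $E = 3V/2$ and $F = V/2 + 2$. For each edge $e$ of $\mathcal{P}$ contributing positively to $D$, all arcs of $D$ coming from $e$ lie on the great circle $C_e := S \cap \mathrm{plane}(p, e)$, and the endpoints on $C_e$ of each maximal visible subarc of $e$ are T-junctions at which $e$ is the occluded edge. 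Writing $v_e$ for this count and $n$ for the number of contributing $\mathcal{P}$-edges, $v_e$ is even and at least $2$, whence $V = \sum_e v_e \geq 2n$, equivalently $n \leq F - 2$.

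This basic inequality goes in the wrong direction to yield $n \geq 8$, so the proof must also exploit the fact that arcs of $D$ lie on only $n$ great circles through $p$. My approach is to combine the trivalence of $D$ with the great-circle geometry via a case analysis on the \emph{small faces} of $D$, classifying each face by its boundary size and by the multiset of $\mathcal{P}$-edges contributing arcs to its boundary. Along each great circle $C_e$, the T-junctions on $C_e$ alternate between those where $e$ is occluded (endpoints of visible subarcs) and those where $e$ is the occluder (interior to visible subarcs), imposing strong constraints on adjacent faces. The key geometric input is that, under the no-vertex hypothesis, certain small-face configurations would force two visible edges of $\mathcal{P}$ to meet at a common apparent endpoint that $p$ also sees, contradicting the hypothesis. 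Coupling the resulting small-face restrictions with a double count of face/edge incidences then yields the desired lower bound $n \geq 8$.

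The main obstacle is the small-face case analysis: the tight configuration has average face size $2E/F = 48/10 = 4.8$, so some faces must be triangles or quadrilaterals, and the argument must delicately distinguish the small-face configurations permitted by the no-vertex hypothesis from those that are excluded. Once the lower bound is established, tightness is demonstrated by an explicit construction, for instance two slender interlocking polyhedral beams positioned symmetrically around $p$ such that each beam's vertices are hidden behind facets of the other, and $p$ sees positive portions of exactly four edges of each beam, eight in total.
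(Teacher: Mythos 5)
Your setup is sound and you correctly identify both key difficulties: that the occlusion diagram on the sphere is $3$-regular under the no-visible-vertex hypothesis, and that several arcs of the diagram may come from the same edge of $\mathcal P$ (all lying on one great circle $C_e$), so that counting arcs does not immediately count edges. However, the central step of the proof is missing. Your Euler-formula computation yields $n\leq F-2$, which as you note goes the wrong way, and the entire burden of the argument is then placed on a ``case analysis on small faces'' that you describe only as a plan: you do not exhibit the forbidden configurations, prove that they force a visible vertex, or carry out the double count that is supposed to yield $n\geq 8$. It is far from clear that this route closes; the paper needs two substantial ideas at exactly this point. First, it proves that the diagram contains at least four \emph{swirls} (cyclically oriented cycles of arcs, each feeding into the next), via a lemma that every open hemisphere contains the convex ``eye'' of some swirl --- a walk argument that repeatedly follows arcs and turns consistently until it closes up. Combined with the fact that the swirl graph is simple, planar, and bipartite, four swirls of size at least three sharing at most four arcs give at least $4\cdot 3-4=8$ arcs. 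Second, to pass from eight \emph{arcs} to eight \emph{edges}, the paper proves that any covering of the diagram by great semicircles (one semicircle per line supporting an edge) needs at least eight semicircles; this is done by a well-founded induction in which the key case --- two arcs on the same great semicircle --- produces seven further arcs touching that great circle, no two of which fit in a single semicircle. Neither ingredient, nor a workable substitute, appears in your proposal.

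The tightness claim is also not established. ``Two slender interlocking beams'' is not verified to hide all vertices while exposing only eight edges, and two disjoint beams do not form a polyhedron under the paper's definition (a connected $3$-manifold with boundary); one must connect the pieces without creating new visible features. The paper instead gives an explicit six-polygon scene (two rectangles and four triangles with concrete coordinates and a $4$-fold symmetry), checks by direct computation which vertices and edges are occluded, and then invokes a separate ``funnel'' construction to extend any such polygonal scene to a single polyhedron with the same visibility map from $p$. You would need analogous explicit verification and a connection step for your construction.
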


\paragraph{Remark.} When $\mathcal P$ is an \emph{orthogonal polyhedron}, i.e., when every facet is orthogonal to a coordinate axis, then any point in $\mathcal P$ sees at least twelve edges (while any exterior point sees at least eight edges). Indeed, take the three planes through an internal point $p$ with normal vectors parallel to the coordinate axes; each of these planes intersects at least four edges visible to $p$. The bound is tight: It is attained, for example, in a cube; it can also be attained for points that do not see any vertices, for example in Seidel's polyhedron~\cite[Chap.~10]{o1987art}.

\paragraph{Polygonal Scenes.} In our lower bound construction for Theorem~\ref{thm:main1} (in \cref{sec:main}), a point $p$ sees eight distinct edges in six pairwise disjoint facets of a polyhedron. We extend Theorem~\ref{thm:main1} to arrangements of polygons. A \emph{polygonal scene} in $\mathbb{R}^3$ is a nonempty finite collection of polygons whose relative interiors are pairwise disjoint; see~\cref{fig:2}(left) for an example. In particular, the set of facets of a polyhedron is a polygonal scene. However, we cannot use the same definition of visibility for polygonal scenes, since interior and exterior are no longer meaningful. Given a polygonal scene $\mathcal{P}$, two points $p,q\in \mathbb{R}^3$ mutually \emph{see} each other if the line segment $pq$ does not \emph{cross} any polygon in $\mathcal{P}$, i.e., there is no polygon $P\in \mathcal{P}$ such that $p$ and $q$ lie in distinct open halfplanes bounded by the plane containing $P$ and $pq$ intersects $P$.

If we interpret the facets of a polyhedron as a polygonal scene, then the edges and vertices of polygons not coplanar with the viewpoint are always ``opaque'', and so two points cannot see each other if the line segment connecting them grazes an edge of $\mathcal P$. Thus, our definition of visibility for polygonal scenes is slightly more restrictive than the definition of visibility for polyhedra (although this distinction has no effect on our theorems).

Note that if $p$ lies in a polygon $P$ of a polygonal scene $\mathcal{P}$, then $p$ sees some vertices in the plane containing $P$. Similarly, if all polygons in $\mathcal{P}$ are coplanar, then every point $p$ (on or off that plane) sees all vertices.
Thus, when we consider a point $p$ that does not see any vertex of (any polygon in) $\mathcal{P}$, we may assume that $\mathcal{P}$ contains two or more polygons, and $p$ is not contained in any polygon in $\mathcal{P}$.

\begin{theorem}\label{thm:main2}
Let $\mathcal P$ be a polygonal scene in $\mathbb{R}^3$.
%Then every point $p\in \mathbb{R}^3$ that does not see any vertex of any polygon in $\mathcal P$ sees positive portions of at least eight distinct edges of polygons in $\mathcal P$.
If a point $p\in \mathbb{R}^3$ does not see any vertex of $\mathcal P$, then it sees positive portions of at least eight distinct edges of $\mathcal P$.
The bound is tight.
\end{theorem}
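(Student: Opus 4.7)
The plan is to extend the proof of Theorem~\ref{thm:main1} from polyhedra to polygonal scenes by revisiting the spherical occlusion diagram (SOD) of $\mathcal{P}$ at $p$, which is the central tool in the polyhedral case. Recall that the SOD is the cell decomposition of the unit sphere $S$ centered at $p$ induced by the radial projection of $\mathcal{P}$: its $2$-cells are maximal open sets of directions in which $p$ sees a single polygon (or no polygon); its $1$-cells are great-circle arcs corresponding to visible portions of edges of $\mathcal{P}$; and its $0$-cells are either projections of visible scene vertices or \emph{T-junctions}, i.e., points where a visible edge arc terminates because its underlying edge becomes occluded by some closer polygon of $\mathcal{P}$. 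Under the hypothesis that $p$ sees no vertex of $\mathcal{P}$, every $0$-cell of the SOD is a T-junction of degree three: two of the incident arcs are the two halves of a single projected edge (the occluder), and the third is the terminating arc of the occluded edge.

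The combinatorial machinery driving Theorem~\ref{thm:main1}---Euler's formula applied to the SOD on $S$, the T-junction degree/parity constraints, and the observation that each great circle carries a contiguous family of arcs coming from a single scene edge---applies verbatim in the polygonal scene setting. Indeed, the SOD depends only on (a) the collection of planar polygons in $\mathbb{R}^3$ together with their pairwise opacity along rays from $p$, and (b) the linearity of each edge (ensuring that edges project to great-circle arcs); neither the two-sidedness of polyhedral facets nor the existence of a polyhedral inside/outside is used in the analysis. One therefore obtains the same lower bound of eight distinct visible edges of $\mathcal{P}$.

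The main obstacle will be verifying that the local degree-three structure at T-junctions persists in the polygonal scene setting, where an edge may bound any number of polygons (including exactly one, as a ``free edge''), so that the local configuration near a T-junction is slightly more varied than in the polyhedral case. A further technical step is a generic-position argument ensuring that distinct edges of $\mathcal{P}$ project to distinct great circles and that $p$ does not lie in the plane of any polygon; degenerate configurations that violate these conditions can be treated as limits of generic ones without reducing the visible-edge count, so they are harmless for a lower bound. Tightness is inherited directly from Theorem~\ref{thm:main1}: the extremal polyhedron constructed there, with exactly eight edges and no vertex visible from $p$, also serves as a polygonal scene (the set of its facets) attaining the bound, since on that configuration the polyhedral and polygonal-scene visibility notions yield the same set of visible edges.
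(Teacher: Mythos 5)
Your high-level plan---reuse the SOD machinery---is the right one, and in fact the paper develops that machinery natively for polygonal scenes (a polyhedron's facet set \emph{is} a polygonal scene), so there is even less to ``extend'' than you anticipate; the logical flow in the paper runs in the opposite direction, with the tight example built as a scene of six polygons first and only afterwards augmented to a polyhedron for \cref{thm:main1}. The genuine gap in your proposal is the passage from \emph{arcs} to \emph{edges}. The SOD combinatorics (swirls, planarity of the swirl graph) yields at least eight arcs (\cref{cor:8arcs}), but a single edge of $\mathcal P$ can contribute several arcs: it may be occluded in its middle by a closer polygon and remain visible on two or more separated sub-segments, and this happens in completely generic position, so neither your genericity assumption nor your ``contiguous family of arcs'' claim addresses it (the arcs of one edge all lie in one great semicircle, but they need not be contiguous). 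Closing this gap is the technical crux of the paper: one must show that every \emph{semicircle cover} of a SOD---a set of great semicircles whose relative interiors contain all arcs---has at least eight elements (\cref{l:semi}), which is strictly stronger than the eight-arc bound and is proved by a separate well-founded induction on arc removal, analyzing the case of two arcs sharing a great semicircle. Your proposal does not contain this step or any substitute for it.

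Two smaller points. First, handling degenerate positions ``as limits of generic ones'' is not sound as stated: weak visibility of a positive portion is not lower semicontinuous (visible sub-segments can shrink to length zero in the limit, and the identity of the eight witnessing edges can change with the perturbation), and a perturbation may also make $p$ see a vertex, leaving the hypothesis unverifiable. The paper instead absorbs all degeneracies into the axioms by replacing the general-position axiom A1 with A1$'$ (arcs shorter than a great semicircle, pairwise internally disjoint), so no limiting argument is needed. Second, your derivation of tightness from \cref{thm:main1} does work logically (polygonal-scene visibility is more restrictive, so the polyhedron's eight visible edges give at most eight, and the lower bound gives exactly eight), but it is circular relative to the paper, where the tightness of \cref{thm:main1} is itself obtained from the explicit eight-edge polygonal scene of \cref{thm:lowerbound2}; at least one of the two tight examples must be exhibited directly.
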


\paragraph{Organization.} \cref{thm:preli1,thm:preli2} are proved in \cref{sec:simple,sec:positive}, respectively.
In \cref{sec:SOD}, we review \emph{Spherical Occlusion Diagrams (SODs)}, a geometric structure introduced in \cite{vigliettaSOD}. By extending the theory developed in \cite{vigliettaSOD}, we prove some properties of SODs which constitute the technical crux of the proof of \cref{thm:main1,thm:main2}. \cref{sec:conclusion} concludes the paper with some open problems.

\subsection{Related Previous Work} 
To place our results in perspective, we briefly review related work on visibility problems in combinatorial and computational geometry.

\paragraph{Triangulations.} 
A triangulation~\cite{LS17,Tri10} of a polyhedron $\mathcal{P}$ in $\mathbb{R}^d$ is a subdivision of the interior of $\mathcal{P}$ into simplices whose vertices are vertices of $\mathcal{P}$.
It is well known that every polygon in the plane can be triangulated; as noted above, this implies that every point in $\mathcal{P}$ can see at least three vertices of $\mathcal{P}$.
However, in dimensions $d\geq 3$, there exist nonconvex polyhedra that cannot be triangulated, the Sch\"onehardt polyhedron being the classical example~\cite{BC16,Sch28}. In general, it is an NP-hard problem to determine whether a given polyhedron can be triangulated~\cite{RuppertS92}. For convex polyhedra, it is NP-hard to find the minimum number of simplices in a triangulation~\cite{BelowLR00,BelowLR04}.

\paragraph{Art Gallery Problems.}
Typical art gallery problems ask for the minimum number of guards that can visually cover the interior of a polygon in the plane, or polyhedron in three-space, or some other environment in the presence of opaque obstacles. A celebrated theorem by Chv\'atal~\cite{chvatal1975combinatorial} shows that every simple polygon with $n$ vertices can be covered by at most $\lfloor n/3\rfloor$ guards located at vertices (i.e., \emph{vertex guards}). If a simple polygon is \emph{orthogonal} (i.e., every edge is parallel to a coordinate axis), $\lfloor n/4\rfloor$ vertex guards suffice~\cite{kahn1983traditional}. Finding the minimum number of vertex guards for a given polygon is NP-complete~\cite{LL86}, and there is an $O(\log \log \mathrm{OPT})$-approximation algorithm for this problem. However, if the guards can be arbitrary points in the plane (i.e., \emph{point guards}), then the problem is known to be $\exists\mathbb{R}$-complete~\cite{AAM22}, and only an $O(\log \mathrm{OPT})$-approximation algorithm is available~\cite{BM17}.

As noted above, covering a 3-dimensional polyhedron by vertex guards may be infeasible. There are some initial results on the minimum number of \emph{edge guards}~\cite{viglietta2011,cano+,Iwa17,viglietta2020} and \emph{face guards}~\cite{IKM12b,IKM12a,SVW11,Vig14}, under the notion of weak visibility: An edge or a face $f$ \emph{sees} a point $p$ if some point $s\in f$ sees the point $p$. The problem formulation for edge and face guards can be further refined depending on whether (topologically) open or closed edges and faces are allowed. However, none of the current bounds is known to be tight. Minimizing the number of edge or face guards is known to be NP-hard in several variants of the problem~\cite{Iwa17,IKM12a,Vig14}. The current best bounds~\cite{cano+} for the minimum number of edge guards in a polyhedron $\mathcal{P}$ in $\mathbb{R}^3$ distinguish between points visible and invisible to vertices: Theorem~\ref{thm:main1} addresses the latter scenario, albeit it does not improve on the current bounds for the edge guard problem.

\paragraph{Hidden Surface Removal.} The computational counterpart of our results is a classical problem in computer vision~\cite{Ber95}: Given an arrangement of polyhedral or polygonal objects (a \emph{polyhedral} or \emph{polygonal scene}) and a light source $s\in \mathbb{R}^3$, compute the parts of the objects that are visible to $s$, i.e., the \emph{visibility map} of $s$. For a polyhedral scene in $\mathbb{R}^3$ with a total of $n$ vertices, edges, and faces, the complexity of the visibility map is $O(n^2)$ and this bound is tight; it can be computed in $O(n^2)$ time~\cite{KMS+02,McK87}. However, one can find the vertices, edges, and faces that are (at least partially) visible to $s$ in $O(n\log n)$ time~\cite{GMV99} in an arrangement of axis-aligned rectangles. In particular, the \emph{visibility counting} problem asks for the number of faces (respectively, edges or vertices) visible to $s$~\cite{bygi2015weak,GM10}. There are many results on data structures that preprocess a polyhedral scene to support a fast computation of the visibility map for a query point $s\in \mathbb{R}^3$; there are also bounds on the number of combinatorially different visibility maps for a given scene~\cite{BHO+97}; refer to the surveys~\cite{OR17,urrutia2000art} for further references. Analogous problems were also considered when the light source $s$ is one or more line segments or triangles~\cite{DDE+09,MNK08}, corresponding to edge guards or face guards.

\section{Proof of \cref{thm:preli1}}\label{sec:simple}

In the \emph{brush} polyhedron depicted in \cref{fig:1:1}, any interior point close to the tip of a tetrahedral ``spike'' sees exactly six edges. In this section, we prove that any point $p\in \mathbb{R}^3$ sees at least six edges of \emph{any} polyhedron. We remark that this also holds if $p$ is in the exterior of the polyhedron.

 \begin{figure}[ht!]
	\centering
	\begin{subfigure}[t]{0.4\linewidth}
		\centering
		\includegraphics[height=3.2cm] {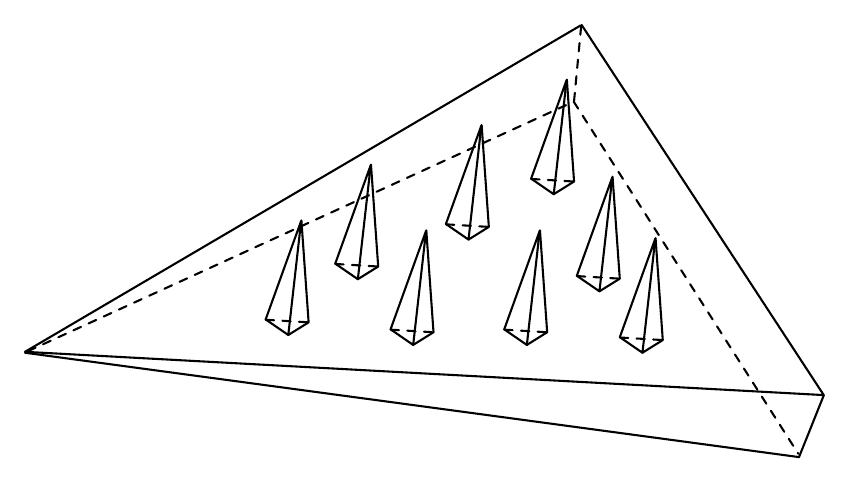}
		\caption{A brush}
		\label{fig:1:1}
	\end{subfigure}
	~~~~~
	\begin{subfigure}[t]{0.5\linewidth}
		\centering
		\includegraphics[height=3.5cm] {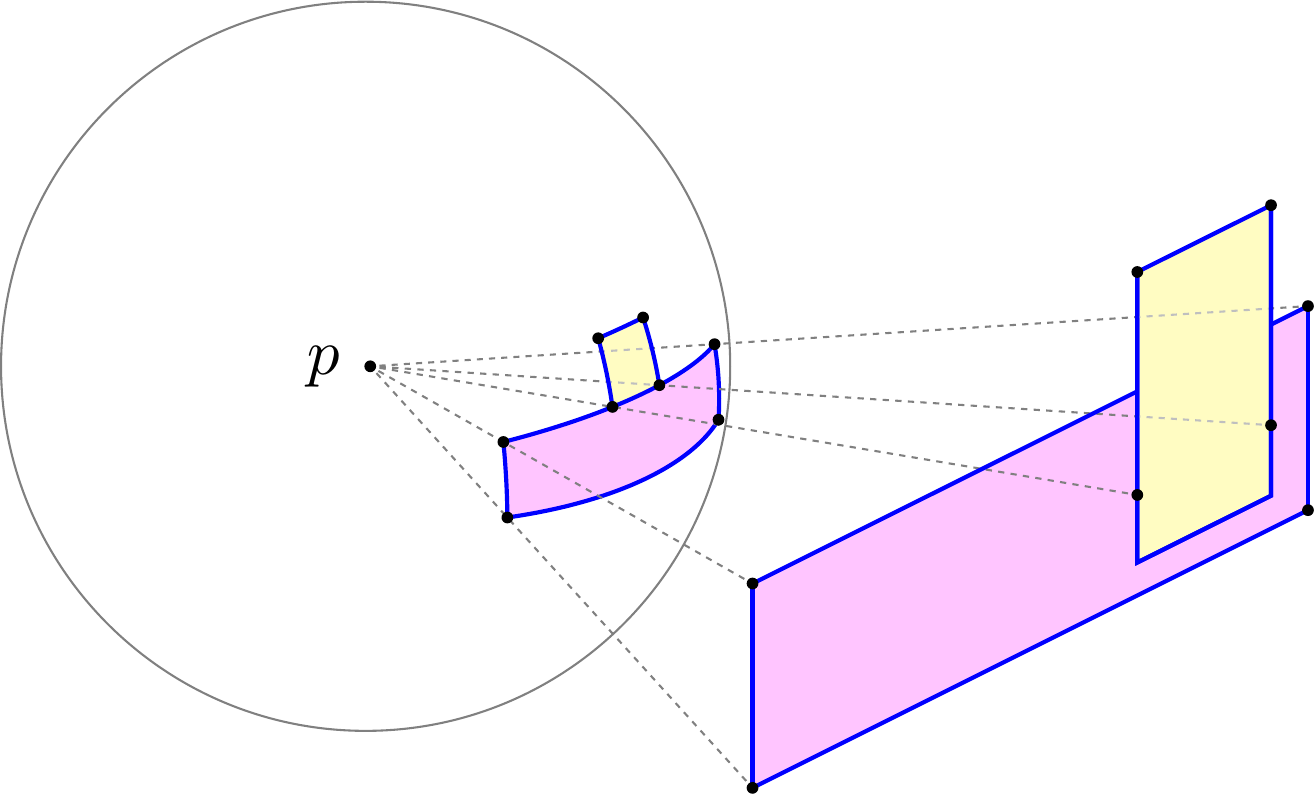}
		\caption{Orthographic projection of polygons onto a sphere}
		\label{fig:1:2}
	\end{subfigure}
	\caption{Visibility in $\mathbb{R}^3$}
	\label{fig:1}
\end{figure}

Let us fix a polyhedron $\mathcal P$ and a point $p\in \mathbb{R}^3$. For each edge $e$ of $\mathcal P$,
we consider the set of points in $e$ that are not occluded by other edges: Let $E_{e,p}$ be the set of points $x\in e$ such that $x$ is visible to $p$ and the relative interior of $xp$ is disjoint from every closed edge of $\mathcal{P}$. Note that $E_{e,p}$ is the union of finitely many pairwise disjoint line segments along $e$.
Let $\mathcal V_{\mathcal P,p}$ be the set of all segments in $e$ corresponding to $E_{e,p}$ over all edges $e$ of $\mathcal P$. Some segments in $\mathcal V_{\mathcal P,p}$ may be ``degenerate'', i.e., they may be single points. However, if $\{x\}\subset E_{e,p}$ is a degenerate segment, then $x$ is contained in at least two edges, consequently it is a vertex of $e$. In particular, every degenerate segment in $\mathcal V_{\mathcal P,p}$ coincides with a vertex of $\mathcal{P}$.

For example, if $\mathcal P$ is a regular tetrahedron and $p$ is a point in the exterior of $\mathcal P$ near the center of a face $F$, then $\mathcal V_{\mathcal P,p}$ consists of the three edges of $F$ plus three degenerate segments coinciding with the three vertices of $F$ (each of these corresponds to an edge of $\mathcal P$ that is hidden behind $F$, only one of its endpoints being visible to $p$).

\begin{lemma}\label{l:6edge1}
$\mathcal V_{\mathcal P,p}$ consists of at least six segments.
\end{lemma}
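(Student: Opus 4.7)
My plan is to centrally project $\partial\mathcal{P}$ from $p$ onto a sphere $S$ centered at $p$ and then apply Euler's formula to the induced spherical planar graph, as suggested by Figure~\ref{fig:1:2}.

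First assume $p$ lies in the interior of $\mathcal{P}$. Every ray from $p$ meets $\partial\mathcal{P}$ at a unique nearest point, yielding a partition of $S$ whose $2$-cells are projections of visible portions of facets and whose arcs are projections of visible portions of edges. I would verify that these arcs are in bijection with the segments of $\mathcal{V}_{\mathcal{P},p}$, since by the very definition of $E_{e,p}$ no interior point of a visible segment can project to a vertex of $\mathcal{P}$ or to a T-junction (a direction in which the open ray $xp$ meets another edge of $\mathcal{P}$). The endpoints of the arcs are either T-junctions---at which exactly three arcs meet, one ending and another passing through---or projections of vertices of $\mathcal{P}$, each of which has ``generalized degree'' (counting arcs together with the degenerate single-point segments of $\mathcal{V}_{\mathcal{P},p}$ incident to it) at least $3$, since every vertex of $\mathcal{P}$ is incident to at least three edges. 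Every face is a spherical polygon with at least three bounding arcs.

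Letting $V$, $E$, $F$ denote the numbers of vertices, arcs, and faces of the graph, and $D$ the number of degenerate segments in $\mathcal{V}_{\mathcal{P},p}$, the above bounds give $2E+D\ge 3V$ and $2E\ge 3F$. Substituting into Euler's identity $V-E+F=2$ then yields $E+D\ge 6$, i.e., $|\mathcal{V}_{\mathcal{P},p}|\ge 6$, which establishes the lemma in the interior case. When $p$ lies on the boundary or in the exterior of $\mathcal{P}$, the projection of $\partial\mathcal{P}$ covers only part of $S$; adjoining the ``sky'' (the set of directions along which rays from $p$ miss $\mathcal{P}$) as a single additional face, bounded by silhouette arcs, again produces a spherical subdivision to which the same Euler count applies. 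The tetrahedron example with $p$ just outside one face illustrates why tracking $D$ is essential: there $E=3$ and $D=3$, summing to the tight bound of $6$.

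The main technical obstacle is justifying the degree and face-size lower bounds in the presence of geometric degeneracies, such as two edges happening to be coplanar with $p$, or a face whose visible portion appears bounded by fewer than three arcs. These cases can be handled either by treating coincident projections as a multigraph (with arcs counted with multiplicity), by a perturbation and limit argument, or by a direct case analysis based on planar cross-sections through $p$, each of which contains at least three visible edges by the two-dimensional visibility result recalled in the introduction.
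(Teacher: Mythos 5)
Your Euler-formula plan breaks at the claimed bijection between the edges of the spherical subdivision and the segments of $\mathcal V_{\mathcal P,p}$. At a T-junction of the visibility map, the projection of (the visible part of) a \emph{far} edge $e_2$ terminates in the relative interior of the projection of a \emph{near} edge $e_1$. The point $y\in e_1$ projecting to that junction has nothing between itself and $p$, so it lies in the relative interior of a single connected component of $E_{e_1,p}$: the definition of $E_{e,p}$ only excludes points $x$ for which the \emph{relative interior of $xp$} meets an edge, i.e.\ occluders in front of $x$ --- it says nothing about edges behind $x$. Hence the segment of $e_1$ is \emph{not} broken at the T-junction, yet the corresponding arc must be subdivided there to obtain a planar graph (otherwise a graph edge would contain a vertex in its interior and Euler's formula would not apply). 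So $E=n+T$, where $n$ is the number of nondegenerate segments and $T$ the number of T-junctions, and your count $E+D\ge 6$ only yields $n+T+D\ge 6$, which is no lower bound on $|\mathcal V_{\mathcal P,p}|=n+D$. In the tight eight-arc example of \cref{fig:2} one has $n=8$ but $V=16$, $E=24$, $F=10$; your inequalities are all satisfied there, but they are equally consistent with arrangements having very few maximal arcs and many T-junctions, so the argument does not establish the lemma. (Your tetrahedron check works only because it has $T=0$.) The secondary bounds you defer --- faces of degree at least $3$ and the degree-$3$ count at vertex projections via degenerate segments --- are repairable, but they do not touch this $E$-versus-$n$ discrepancy.

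The quantity you actually need to control is the number of \emph{maximal} arcs, and the paper does this by running the counting argument on a different graph: the \emph{contact graph}, whose vertices are the arcs themselves and whose directed edges record ``an endpoint of $a$ lies in the interior of $b$.'' After extending degenerate arcs and perturbing common endpoints into swirls, every vertex of that graph has out-degree exactly $2$, so it has $n$ vertices and $2n$ edges; the geometric fact that each projected arc is shorter than a semicircle (hence two arcs meet in at most one point and form no lens) makes the graph \emph{simple}, and planarity then forces $n\ge 6$. That no-lens property is the essential geometric input, and it never enters your primal-subdivision count; without it, no purely combinatorial Euler bookkeeping on the subdivision can be expected to reach six. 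If you want to salvage your approach, you must recast the count so that each segment of $\mathcal V_{\mathcal P,p}$ contributes exactly one edge (or one vertex) of the graph being counted --- at which point you have essentially rediscovered the contact graph.
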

\begin{proof}
Let $S$ be the unit sphere centered at $p$, and let us orthographically project all the segments in $\mathcal V_{\mathcal P,p}$ onto $S$; \cref{fig:1:2} shows an example of an orthographic projection. The projection of each segment $s\in \mathcal V_{\mathcal P,p}$ is an arc $a_s$ of a great circle of $S$; let $\mathcal A$ be the collection of these arcs. Note that $\mathcal A$ is a noncrossing arrangement of arcs, some of which may be degenerate, i.e., single points. Each nondegenerate arc of $\mathcal A$ has two endpoints, each of which lies on at least one other arc of $\mathcal A$. Each arc in $\mathcal A$ is the projection of a line segment, hence it is shorter than a great semicircle; every face of the arrangement is the projection of a polygon in $\mathbb{R}^3$, hence it is contained in a hemisphere. Therefore, any two arcs in $\mathcal{A}$ intersect in at most one point; in other words, $\mathcal{A}$ does not contain any \emph{lens}, intended as a pair of internally disjoint curves that share more than one point. \cref{fig:6edge:1} depicts (a plane drawing of) an arrangement $\mathcal A$, where the dashed line represents a degenerate arc, which originates from an edge of $\mathcal P$ that is hidden behind a face, except for an endpoint.

\begin{figure}[ht!]
	\centering
	\begin{subfigure}[t]{0.45\linewidth}
		\centering
		\includegraphics[scale=0.75] {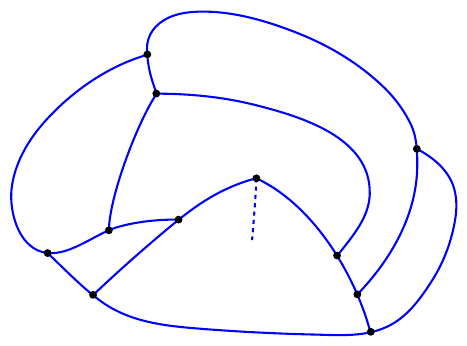}
		\caption{A plane drawing of $\mathcal A$, including a degenerate arc represented as a dashed line}
		\label{fig:6edge:1}
	\end{subfigure}
	~~
	\begin{subfigure}[t]{0.45\linewidth}
		\centering
		\includegraphics[scale=0.75] {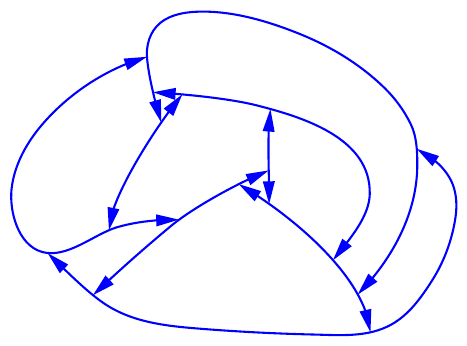}
		\caption{The arrangement $\mathcal A'$}
		\label{fig:6edge:2}
	\end{subfigure}\\
	\vspace{0.5cm}
	\begin{subfigure}[t]{0.45\linewidth}
		\centering
		\includegraphics[scale=0.75] {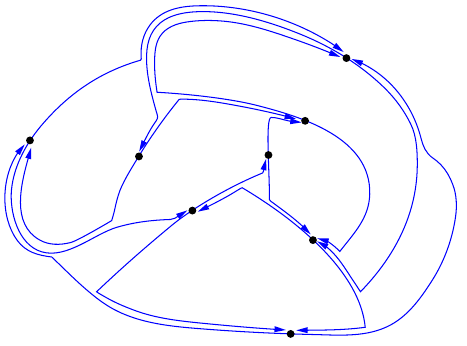}
		\caption{The contact graph $\mathcal G$ of $\mathcal A'$}
		\label{fig:6edge:3}
	\end{subfigure}
	~~
	\begin{subfigure}[t]{0.45\linewidth}
		\centering
		\includegraphics[scale=0.75] {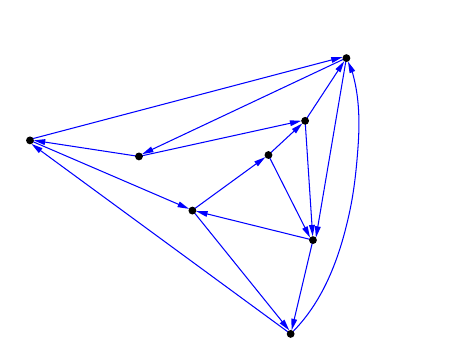}
		\caption{Another plane drawing of $\mathcal G$; note that every vertex has out-degree $2$}
		\label{fig:6edge:4}
	\end{subfigure}
	\caption{Proving that $\mathcal V_{\mathcal P,p}$ consists of at least six segments}
	\label{fig:6edge}
\end{figure}

Let us transform $\mathcal A$ as follows. First, we successively extend each degenerate arc in $\mathcal A$ (which, we recall, coincides with the projection of a vertex of $\mathcal P$ onto $S$) along a great circle until it hits another arc of $\mathcal A$. Since the extended arcs each lie in some face of the initial arrangement, they are each shorter than a semicircle, which in turn implies that no lens is created. After this step, $\mathcal{A}$ is an arrangement of pairwise noncrossing nondegenerate arcs without lenses.

We further modify the arrangement $\mathcal{A}$ to an arrangement $\mathcal{A}'$ as follows. (The arcs in $\mathcal{A}'$ are not necessarily arcs of great circles). For each endpoint $v$ of an arc in $\mathcal A$ that does not lie in the relative interior of any arc of $\mathcal A$, we do the following. Let $a_1, a_2, \dots, a_k$ be the arcs incident to $v$, taken in clockwise or counterclockwise order. By construction, $v$ is either the projection of a vertex of $\mathcal P$, or the endpoint of an arc that has been extended. In both cases, we have $k\geq 3$. Now deform $a_1$, $a_2$, \dots, $a_k$ in a small neighborhood of $v$ so that the endpoint of $a_i$ is re-routed to the interior of $a_{i+1}$, for all $1\leq i\leq k$, where indices are taken modulo $k$. This operation can be done without creating crossings or lenses (since $k\geq 3$).

The resulting arrangement $\mathcal A'$ is illustrated in \cref{fig:6edge:2}. Note that $\mathcal A'$ is still a noncrossing arrangement of Jordan arcs without lenses. In particular, (i)~each endpoint of each arc is in the interior of exactly one other arc, (ii)~no arc has both endpoints in the interior of the same arc, and (iii)~if an arc $a$ has an endpoint in the interior of an arc $b$, then $b$ does not have an endpoint in the interior of $a$.

Next we define the \emph{contact graph} $\mathcal G$ of $\mathcal A'$ as follows. Let $\mathcal{G}$ be a directed graph, where the vertices correspond to the arcs in $\mathcal A'$, and there is a directed edge $(a,b)$ in $\mathcal G$ if and only if the arc $a\in \mathcal A'$ has an endpoint in the interior of the arc $b\in \mathcal A'$ (see \cref{fig:6edge:3}). Due to the properties of $\mathcal A'$, the contact graph $\mathcal G$ is a simple planar directed graph where each vertex has out-degree exactly $2$. Also, the number of vertices of $\mathcal G$ is equal to the number of segments in $\mathcal V_{\mathcal P,p}$. In particular, $\mathcal G$ is nonempty, because $p$ sees at least one nondegenerate segment of an edge of $\mathcal P$.

We will now prove that $\mathcal G$ has at least six vertices, which implies that $\mathcal V_{\mathcal P,p}$ consists of at least six segments, as well. Note that if $\mathcal G$ has $n\geq 1$ vertices, it has exactly $2n$ edges. Since a simple graph on $n$ vertices can have at most $\binom{n}{2}$ edges, then $2n\leq n(n-1)/2$ implies $n\geq 5$. Moreover, the only $5$-vertex simple graph with $10$ edges is the complete graph, which is not planar. We conclude that $n\geq 6$, as claimed.
\end{proof}

\begin{lemma}\label{l:6edge2}
If there is an edge $e$ of $\mathcal P$ such that $E_{e,p}$ is disconnected, then $p$ sees at least six distinct edges of $\mathcal P$.
Furthermore, if two or more connected components of $E_{e,p}$ have positive length, then $p$ sees positive portions of at least six distinct edges of $\mathcal P$.
\end{lemma}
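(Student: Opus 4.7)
My plan is to strengthen the contact-graph argument of Lemma~\ref{l:6edge1}. That proof produces a simple planar digraph $\mathcal{G}$ whose vertices correspond to the segments in $\mathcal{V}_{\mathcal{P},p}$, and in which every vertex has out-degree exactly $2$; a standard planarity count then forces $|V(\mathcal{G})|\ge 6$. For Lemma~\ref{l:6edge2} I need the sharper conclusion that the number of distinct edges of $\mathcal{P}$ represented by these segments is at least six.

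First, I would observe that if $v_1$ and $v_2$ are vertices of $\mathcal{G}$ corresponding to two different components of the same edge $e$ of $\mathcal{P}$, then their arcs on $S$ are disjoint subarcs of a common great circle; hence neither endpoint of one arc can lie in the relative interior of the other, and no edge of $\mathcal{G}$ joins $v_1$ to $v_2$. Now form the quotient multigraph $\mathcal{G}'$ by contracting, for each edge of $\mathcal{P}$, all the vertices of $\mathcal{G}$ that correspond to its components. Then $\mathcal{G}'$ has $k$ vertices (where $k$ is the number of distinct visible edges of $\mathcal{P}$), no self-loops, exactly $2n$ edges counted with multiplicity (where $n=|V(\mathcal{G})|\ge 6$), and is planar as a minor of the planar graph $\mathcal{G}$.

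The technical challenge is that $\mathcal{G}'$ may contain parallel edges, so the simple-planar inequality $|E|\le 3|V|-6$ does not apply to it directly. I would handle this by charging each ``extra'' parallel edge between vertices $u,v\in\mathcal{G}'$ to a surplus component, i.e., a component of some $E_{e,p}$ beyond the first for its edge. Since surplus components amount to $n-k$ in total, this would bound the total number of parallel edges by $n-k$; combined with the planar simple-graph bound $|E(\mathcal{G}'_{\mathrm{simple}})|\le 3k-6$, one obtains $2n\le (3k-6)+(n-k)$, equivalently $k\ge (n+6)/2 \ge 6$. I expect the main obstacle to be verifying the charging step, which requires a local analysis of how the two components of one edge can both make contact with the same other edge's arcs in the spherical arrangement --- essentially a question about the arrangement near the gap between the components.

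For the second statement, the same approach is applied to the subarrangement consisting only of nondegenerate arcs, after the extension step of the proof of Lemma~\ref{l:6edge1}. When two components of $E_{e,p}$ have positive length, both yield nondegenerate arcs, and running the argument on nondegenerate arcs throughout ensures that the six distinct edges produced each have a positive-length visible component.
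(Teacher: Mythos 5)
Your proposal hinges on a charging step that you explicitly leave unverified, and that step is the entire content of the lemma; as written, the argument is incomplete. Concretely, you need the total multiplicity excess of the contracted multigraph $\mathcal{G}'$ (the quantity $2n-|E(\mathcal{G}'_{\mathrm{simple}})|$) to be at most $n-k$, i.e.\ each surplus component absorbs at most one extra parallel edge. But each component contributes \emph{two} out-edges to $\mathcal{G}$, so a single surplus component of an edge $e$ can a priori create two new parallel adjacencies (one toward each of the two classes its endpoints land on), and you would also have to account for extra parallel edges created by arcs of \emph{other} edges feeding \emph{into} the several components of $e$. Ruling these configurations out requires exactly the kind of geometric analysis you defer (``a local analysis of how the two components of one edge can both make contact with the same other edge's arcs''), and it is further complicated by the fact that after the degenerate-arc extensions and the swirl perturbation of Lemma~\ref{l:6edge1} the arcs are no longer subarcs of great circles, so arguments based on two great circles meeting in only two antipodal points no longer apply verbatim. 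Your claim that $\mathcal{G}'$ has no self-loops has a related soft spot: a degenerate component of $E_{e,p}$ is extended along an essentially arbitrary great circle and could be routed into another component of the same edge unless the extension direction is chosen with care. Finally, for the ``positive portion'' statement, discarding the degenerate arcs invalidates the count $n\ge 6$ from Lemma~\ref{l:6edge1} (which is proved for the full arrangement including extended degenerate arcs), so that part also does not follow as stated.

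For comparison, the paper avoids all of this by working in two dimensions: since $E_{e,p}$ is disconnected, $p$ is not collinear with $e$, so $p$ and $e$ span a plane $\alpha$; in the cross-section of $\mathcal{P}$ with $\alpha$, the two visible components $s_1,s_2$ of $e$ force four visible cross-section vertices $v_1,\dots,v_4$ (two ``blocking'' vertices per component), each lying on an edge of $\mathcal{P}$ distinct from $e$, and a fifth visible vertex $v_5$ is found in the half-plane bounded by the line through $p$ parallel to $e$ on the side away from $e$, by the standard circular-projection argument in the plane. Together with $e$ itself this gives six distinct edges, and a small neighborhood of each $v_i$ yields a positive visible portion. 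If you want to salvage your approach, you would need to prove the charging inequality as a standalone combinatorial--geometric lemma about the arrangement $\mathcal{A}'$; absent that, I recommend the cross-section route.
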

\begin{proof}
First assume that $E_{e,p}$ consists of two or more segments of positive length, and let $s_1$ and $s_2$ be two such segments in $E_{e,p}$. Observe that $p$ is not collinear with $e$, or $E_{e,p}$ would be connected. Thus, $p$ and $e$ determine a unique plane $\alpha$. Consider the cross section $P$ of $\mathcal P$ corresponding to $\alpha$, illustrated in \cref{fig:aligned}.

\begin{figure}[h]
\centering
\includegraphics[scale=0.8]{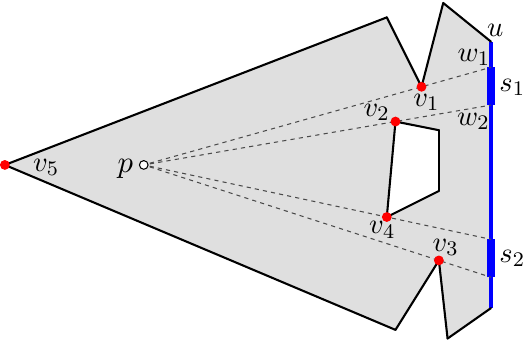}
\caption{If $p$ sees two sub-segments $s_1$ and $s_2$ of the same edge $e$ of $\mathcal P$, it sees at least six edges.}
\label{fig:aligned}
\end{figure}

Let $u$ be the endpoint of $e$ such that $u$ is closer to $s_1$ than to $s_2$, and let $w_1$ and $w_2$ be the endpoints of $s_1$, with $w_1$ closer to $u$. Clearly, there must be a vertex $v_2$ of $P$ on the segment $pw_2$. Also, there must be a vertex $v_1$ of $P$ on the segment $pw_1$, possibly with $v_1=u=w_1$. In both cases, there is an edge of $\mathcal P$ other than $e$ that intersects $\alpha$ in $v_i$, for $i\in \{1,2\}$, which is visible to $p$ in the plane $\alpha$.
Furthermore, a small neighborhood of $v_1$ and $v_2$, respectively, $p$ sees a positive portion of two edges of $\mathcal{P}$. In the special case that $v_1=u=w_1$, the point $p$ sees the interior of some face $F$ incident to $u$ and $e$, and $p$ sees a positive portion of another edge of $F$ incident to $v_1$.
Similarly, $s_2$ determines two additional vertices of $P$, say $v_3$ and $v_4$, which are visible to $p$ and contained in some edges of $\mathcal{P}$.
%
%(Observe that this is true even if $s_1$ or $s_2$ are degenerate.)

Now consider the line $\ell$ through $p$ and parallel to $e$, and let $\ell^+\subset \alpha$ denote the closed halfplane bounded by $\ell$ that does not contain $e$. Clearly, $\ell^+$ does not contain vertices $v_1,\ldots ,v_4$, either.
%
% Note that $P$ must vertices on the opposite side of $\ell$ with respect to $e$.
%The closest to $p$ among all such vertices, say $v_5$, must be visible to $p$. \csaba{This claim is false.}
%
Note, however, that $P$ must have at least one vertex in $\ell^+$ visible to $p$. Indeed, let $S$ be a circle centered at $p$, and project orthographically all the segments on the boundary of $P$ that are visible to $p$ onto $S$. Since each segment projects to a circular arc shorter than a semicircle, then $S\cap \ell^+$ contains the endpoint of an arc, which is the projection of a vertex $v_5$ of $P$ in $\ell^+$.

We have determined five vertices, $v_1$, $v_2$, $v_3$, $v_4$, and $v_5$, that correspond to distinct edges of $\mathcal P$ of which $p$ sees at least a positive portion. Since $e$ is distinct from these five edges, we conclude that $p$ sees a positive portion of at least six edges of $\mathcal P$.

Now suppose that $s_1$ or $s_2$ has zero length, i.e., it is an endpoint of the edge $e$. If $u=w_1=w_2$, then $u$ is the endpoint of at least three edges of $\mathcal{P}$: edge $e$ and two additional edges; we may assume that $v_1$ and $v_2$ correspond to two edges incident to $u$ other than $e$. The same argument holds for $s_2$. We conclude that $p$ sees at least six edges of $\mathcal{P}$ (albeit, possibly only one endpoint of some of these edges).
\end{proof}

\cref{thm:preli1} now follows immediately. If some $E_{e,p}$ has two or more components, then \cref{l:6edge2} implies that $p$ sees at least six distinct edges of $\mathcal P$. Otherwise, $p$ sees at least as many distinct edges of $\mathcal P$ as there are segments in $\mathcal V_{\mathcal P,p}$; thus, by \cref{l:6edge1}, $p$ sees at least six edges.

\section{Proof of \cref{thm:preli2}}\label{sec:positive}

Every point in a tetrahedron $\mathcal{P}$ sees positive portions of all six edges; a point in the exterior of $\mathcal{P}$ but close to the center of a face $F$ of $\mathcal{P}$ sees positive fractions of only three edges. This shows that both bounds in \cref{thm:preli2} are tight.

For the lower bounds in \cref{thm:preli2}, let $\mathcal{P}$ be a polyhedron and $p$ a point in $\mathbb{R}^3$.
If $p \notin \mathcal{P}$, then consider any point $q$ in the interior of $\mathcal{P}$ in general position with respect to $\mathcal{P}$. Then $p$ sees the interior of a face of $\mathcal{P}$ along the ray $\overrightarrow{pq}$, and it sees either another face or infinity in the opposite direction. This means that the visibility map of $p$ (in an orthographic projection to a sphere $S$ centered at $p$) has at least two faces, hence it consists of arcs of at least three distinct great circles. These arcs correspond to three distinct edges of $\mathcal{P}$, some positive portions of which are visible to $p$.

If $p$ is coplanar with a facet $F$ of $\mathcal{P}$, then $p$ sees at least three vertices and positive portions of at least three edges of $F$. Furthermore, each visible vertex $v$ of $F$ is incident to an edge $e_v$ that is not coplanar with $F$ and whose initial portion is visible to $p$. There are at least three distinct such edges, as $v$ sees at least three vertices of $F$. Altogether, $v$ sees positive portions of at least six edges of $\mathcal{P}$.

In the remainder of this section, we may assume that $p$ lies in the interior of $\mathcal{P}$.
We follow the proof of \cref{thm:preli1} (\cref{sec:simple}), but without degenerate segments.
Recall that for each edge $e$ of $\mathcal P$, we denote by $E_{e,p}$ the set of points $x\in e$ such that $x$ is visible to $p$ and the relative interior of $xp$ is disjoint from every closed edge of $\mathcal{P}$,
where $E_{e,p}$ is the union of finitely many pairwise disjoint line segments along $e$.  Let $\mathcal W_{\mathcal P,p}$ be the set of all segments \emph{of positive length} in $e$ corresponding to $E_{e,p}$ over all edges $e$ of $\mathcal P$. Finally, let $\mathcal A$ be the collection of orthographic projections of the segments in $\mathcal W_{\mathcal P,p}$ onto a sphere $S$ centered at $p$. As noted in \cref{sec:simple}, $\mathcal{A}$ is an arrangement of pairwise noncrossing arcs of great circles. Each arc is strictly shorter than a semicircle, and so they cannot form lenses. Furthermore, as $p$ lies in the interior of $\mathcal{P}$, every face of the arrangement $\mathcal{A}$ is strictly contained in a hemisphere of $S$.

It is sufficient to show that $\mathcal W_{\mathcal P,p}$ contains at least six segments, and then Lemma~\ref{l:6edge2} completes the proof. Let $n=|\mathcal W_{\mathcal P,p}|=|\mathcal{A}|$. We distinguish between two cases.

\paragraph{Case~1: There exists a facet $F$ of $\mathcal{P}$ that contains three or more segments of $\mathcal{W}_{\mathcal{P},p}$.} Let $\alpha$ be a plane parallel to $F$ and containing $p$. The cross section of $\mathcal{P}$ corresponding to $\alpha$ consists of one or more polygons. If we triangulate the polygons arbitrarily, then $p$ lies in one of the triangles, and so $p$ sees at least three vertices of $P$. These vertices correspond to three distinct edges of $\mathcal{P}$, in which $p$ sees a positive portion, hence they each contain a segment in $\mathcal{W}_{\mathcal{P},p}$. Since none of these segments is in the plane $\alpha$, then $\mathcal{W}_{\mathcal{P},p}$ contains at least six segments, as required.

\paragraph{Case~2: Every facet of $\mathcal{P}$ contains at most two segments of $\mathcal{W}_{\mathcal{P},p}$.}
Let $Q$ be the set of points $q\in S$ such that $q$ is the endpoint of precisely two arcs in $\mathcal{A}$ but does not lie in the relative interior of any arc in $\mathcal{A}$. Each point $q\in Q$ is the orthographic projection of a vertex of $\mathcal{P}$ visible to $p$, and the two incident arcs are orthographic projections of two edges of some facet $F_v$ incident to $v$. Note that the two arcs incident to $v$ cannot lie on the same great circle, otherwise $p$ would be coplanar with the facet $F_v$, and $p$ would see a third edge incident to $v$; hence, $q$ would be the endpoint of three or more arcs in $\mathcal{A}$.

Further note that no two points in $Q$ are connected by an arc in $\mathcal{A}$, otherwise the three arcs incident to these points would be the projections of three segments in $\mathcal{W}_{\mathcal{P},p}$ on the boundary of the same facet. Since each point $q\in Q$ is incident to two arcs, and each arc is incident to at most one point in $Q$, we conclude  that $2\, |Q|\leq n$.

We modify $\mathcal{A}$ as follows. For each point $q\in Q$ consider the two incident arcs, say $a_1,a_2\in \mathcal{A}$. Extend one of them, say $a_1$ along a great circle beyond $q$ until it hits another arc.
Note that $a_1$ was originally shorter than a great semicircle; also, both $a_1$ and its extension lie in a same face of $S$, which is contained in a hemisphere. Thus, the extended arc is still shorter than a great semicircle. We obtain a pairwise noncrossing arrangement $\mathcal{A}'$ of arcs of great circles, each of which is shorter than a semicircle.
This implies that the $\mathcal{A}'$ does not contain lenses.
%
%The arrangement $\mathcal{A}'$ may contain lenses. Suppose that arcs $a_1,a_2\in \mathcal{A}'$ form a lens, that is they intersect in two points, say $s_1,s_2\in S$. Since $a_1$ and $a_2$ are arcs of great circles, then $s_1$ and $s_2$ are antipodal points of the sphere. This implies that both $a_1$ and $a_2$ have been extended in the process above. Which implies that they have distinct endpoints in $Q$. As each arc of $\mathcal{A}$ is extended beyond at most one of its endpoints, and the extensions were done sequentially, then $s_1$ and $s_2$ lie in the relative interiors of distinct arcs in $\{a_1,a_2\}$. In particular, an endpoint of $a_1$ is in the interior of $a_2$ and an endpoint of $a_2$ is in the interior of $a_1$.

We further modify the arrangement $\mathcal{A}'$ as in \cref{sec:simple}: For every point $s\in S$ that is the endpoint of three or more arcs in $\mathcal A'$ but does not lie in the relative interior of any arc of $\mathcal A'$, we perturb the arrangement in a neighborhood of $s$ so that the endpoint of each arc incident to $s$ is re-routed to the interior of the next arc (refer to \cref{fig:6edge:2}). Denote by $\mathcal{A}''$ the resulting arrangement of noncrossing arcs, and let $\mathcal{G}$ be its contact graph introduced in \cref{sec:simple}.

Recall that $\mathcal{G}$ is a directed planar graph where every node has outdegree two; thus, it has $n=|\mathcal{A}|$ vertices and $2n$ edges. Since $\mathcal{A}'$ does not contain lenses, $\mathcal{G}$ is a simple graph, and $n\geq 6$ as in the proof of \cref{l:6edge1}.

\section{Spherical Occlusion Diagrams}\label{sec:SOD}

%Proving that, for an arbitrary polyhedron $\mathcal P$ in $\mathbb{R}^3$, any point $p$ that sees no vertices of $\mathcal P$ can always sees at least eight edges of $\mathcal P$ is tricky. We will prove that this is true more generally if $\mathcal P$ is any finite nonempty arrangement of internally disjoint polygons, i.e., a set of polygons in $\mathbb{R}^3$ whose relative interiors are mutually disjoint. If $\mathcal P$ is an arrangement of internally disjoint polygons, any vertex (respectively, edge) of a polygon in $\mathcal P$ is said to be a vertex (respectively, edge) of $\mathcal P$, as well.

To prove our next results, we study \emph{Spherical Occlusion Diagrams} (SOD), introduced in~\cite{vigliettaSOD}. SODs arise from the visibility map of a point $p$ that does not see any vertex of a polygonal scene $\mathcal P$. Specifically, a \emph{visibility map} generated by $\mathcal P$ with viewpoint $p$ is the set of arcs of great circles obtained by orthographically projecting all edge sub-segments of $\mathcal P$ that are visible to $p$ onto the unit sphere centered at $p$. \cref{fig:1:2} shows the orthographic projection of two rectangles onto a sphere, and \cref{fig:2} shows the visibility map generated by the polyhedron in the middle picture, as well as the polygonal arrangement in the left picture, both with viewpoint $p$ at the center of the arrangement.

\begin{figure}
\centering
\includegraphics[width=\linewidth]{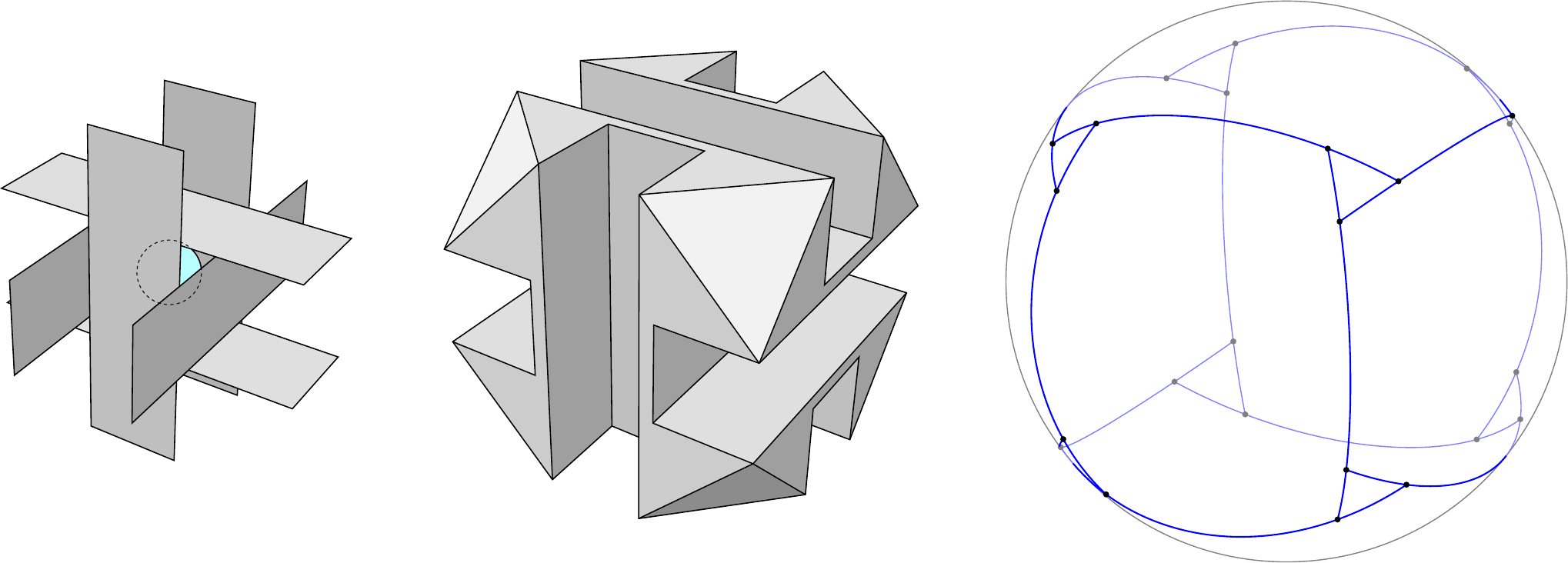}
\caption{The SOD in the right picture is generated by the arrangement of internally disjoint rectangles in the left picture, where the viewpoint is at the center of the arrangement. The polyhedron in the middle picture generates the same SOD.}
\label{fig:2}
\end{figure}

A SOD arises from an axiomatic formulation of basic properties of a visibility map from a viewpoint $p$. Formally, a SOD is defined as a finite non-empty set of arcs of great circles on a sphere satisfying the following properties, which are called \emph{diagram axioms} in~\cite{vigliettaSOD}:
\begin{itemize}
\item[(A1)] Each arc is shorter than a great semicircle, and any two arcs are internally disjoint.
\item[(A2)] Each arc feeds into another arc at each endpoint.
\item[(A3)] All arcs that feed into the same arc reach it from the same side.
\end{itemize}
An arc $a$ is said to \emph{feed into} another arc $a'$ (equivalently, $a$ \emph{hits} $a'$ and $a'$ \emph{blocks} $a$) if an endpoint of $a$ is in the relative interior of $a'$.

It is shown in~\cite{vigliettaSOD} that the visibility map from any arbitrarily located viewpoint $p$ that does not see any vertex of $\mathcal P$ is indeed a SOD. (Incidentally, it is known that not every SOD is a visibility map~\cite{KT22}.) We summarize some other basic results from~\cite{vigliettaSOD}:
\begin{theorem}\label{thm:SOD}
The following statements hold for every SOD $\mathcal S$.
\begin{itemize}
\item[(1)] No two arcs of $\mathcal S$ feed into each other.
\item[(2)] The union of all the arcs of $\mathcal S$ is a connected set.
\item[(3)] If $\mathcal S$ has $n$ arcs, it partitions the sphere into $n + 2$ spherically convex regions.
\item[(4)] The relative interior of any great semicircle intersects some arc of $\mathcal S$.\qed
\end{itemize}
\end{theorem}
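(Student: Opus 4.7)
I would prove the four parts in the order in which they are stated, since later parts lean on the earlier ones.

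For (1), I would argue by contradiction. Suppose $a,a'\in\mathcal S$ feed into each other: an endpoint $p$ of $a$ lies in the relative interior of $a'$, and an endpoint $p'$ of $a'$ lies in the relative interior of $a$, so $p,p'\in a\cap a'$. The supporting great circles of $a$ and $a'$ either coincide---which contradicts A1', since $a$ and $a'$ would then overlap in a nondegenerate subarc---or meet in exactly two antipodal points, only one of which can lie on each arc because each arc is shorter than a great semicircle. Hence $|a\cap a'|\le 1$, which forces $p=p'$; but then $p$ is simultaneously an endpoint and a relative interior point of $a$, impossible.

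For (2), I would consider the planar arrangement on the sphere whose vertices are the points where arcs meet and whose edges are the maximal subarcs between consecutive vertices. By A2, each arc contributes two endpoints, each landing in the relative interior of another arc at a ``T-junction''. Assuming general position (distinct arc endpoints, only two arcs per junction), this yields $V=2n$ and $E=3n$. If the union of arcs had $c\ge 2$ connected components, Euler's formula on the sphere would read $V-E+F=1+c$, so $F\ge n+3$. I would then derive the complementary bound $V-E+F=2$ by a Gauss--Bonnet computation: sum the area formula for each face of the arrangement, observing that the angles at each T-junction sum to $2\pi$ by A3 (which prescribes the local side structure), so that the total angle contribution is $2\pi V$; comparing with the total area $4\pi$ of the sphere gives $V-E+F=2$, forcing $c=1$.

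For (3), with connectivity from (2) in hand Euler's formula immediately yields $F=n+2$. To obtain spherical convexity of each face $R$, I would use A1' and A3 together: A3 ensures that along each arc $a$ bounding $R$, every arc that feeds into $a$ does so from the side opposite to $R$, so $R$ lies entirely in the open hemisphere determined by the great circle of $a$ on the side of $R$. Since each such arc is shorter than a great semicircle, the intersection of all these open hemispheres is spherically convex, and by A2 this intersection coincides with $R$.

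For (4), I would again argue by contradiction. If some great semicircle $\gamma$ had relative interior disjoint from every arc of $\mathcal S$, then $\gamma^\circ$ would lie in a single face $R$ of the arrangement. By (3), $R$ is spherically convex and hence contained in an open hemisphere, whose spherical diameter is strictly less than $\pi$. But $\gamma^\circ$ contains points arbitrarily close to the two antipodal endpoints of $\gamma$, so it has spherical diameter $\pi$, a contradiction.

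\textbf{Main obstacle.} The delicate technical point is handling the degeneracies permitted by A1' but not by the stricter axiom A1 of \cite{vigliettaSOD}: several arcs may share an endpoint as long as the shared point lies in the relative interior of a passing arc. This alters the naive counts of $V$ and $E$ in the Euler calculation. I would deal with it by an infinitesimal perturbation that splits each shared endpoint into a cluster of nearby T-junctions---preserving A1', A2, A3 and the combinatorial type of the arrangement---so that $V$ and $E$ increase by compensating amounts, leaving $V-E$, and hence $F$, unchanged. This is precisely the ``essentially the same proof'' passage promised after the definition.
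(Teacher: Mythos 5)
Parts (1) is correct, and your deduction of (4) from (3) is essentially right; note also that the paper itself supplies no proof of this theorem (it is imported from \cite{vigliettaSOD} with the remark that the proofs carry over to axiom A1'), so you are being judged on internal correctness. The problem is part (2), whose argument is circular. The Gauss--Bonnet identity you sum over faces, $\mathrm{Area}(R)=\sum_i\theta_i-(k-2)\pi$, is the formula for a geodesic polygon that is a topological \emph{disk}; for a general face it reads $\mathrm{Area}(R)=\sum_i\theta_i-k\pi+2\pi\chi(R)$. Summing the correct formula over all faces gives $4\pi=2\pi V-2\pi E+2\pi\sum_R\chi(R)$, which is just Euler's formula for a possibly disconnected graph all over again (a disconnected arrangement forces some face to be an annulus or worse, with $\chi(R)\le 0$), not an independent proof that $V-E+F=2$. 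In other words, your computation silently assumes every face is a disk, which is equivalent to the connectivity you are trying to establish. A symptom of the circularity is that A3 does no work in your calculation --- the angles around a T-junction sum to $2\pi$ with or without A3 --- yet A3, together with the arcs being shorter than great semicircles, is exactly what must be exploited to exclude disconnected configurations satisfying A1' and A2 alone. Connectivity is the genuinely hard part of this theorem, and your proposal contains no argument for it.

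A secondary gap sits in (3): you claim A3 forces every arc feeding into a boundary arc $a$ of a face $R$ to arrive from the side opposite to $R$. That is false for the faces lying on the feeding side of $a$, which exist in abundance since by A2 every arc endpoint lands on some arc; for such a face the feeders arrive from the same side as $R$, and the assertion that $R$ stays within the open hemisphere cut off by the great circle of $a$ is precisely the global convexity you are trying to prove, not a consequence of A3. The workable route is to show that every interior angle of every face is at most $\pi$ (this is where the T-junction structure and A3 enter) and then upgrade local convexity to global convexity and to containment in an open hemisphere; that last containment is also what (4) actually needs, since a region that is merely ``spherically convex'' could a priori still have diameter $\pi$. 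These issues in (3) and (4) are repairable with more care, but (2) is missing an idea, not just details.
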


A common structure in a SOD is the \emph{swirl}, which is defined as a cycle of arcs such that each arc feeds into the next, going always clockwise or always counterclockwise. In the SOD in \cref{fig:2}, there are four clockwise swirls and four counterclockwise swirls.
The \emph{swirl graph} of a SOD is the undirected multigraph on the
set of swirls such that, for each arc shared by two swirls, there is an edge in
the swirl graph. The following theorem is proved in~\cite{vigliettaSOD}:

\begin{theorem}\label{thm:swirls}
The swirl graph of a SOD is a simple planar bipartite graph with nonempty partite sets.\qed
\end{theorem}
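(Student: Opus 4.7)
My plan is to establish a correspondence between swirls and faces of the spherical arrangement of the SOD, and derive all four properties from this correspondence together with \cref{thm:SOD}.

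I would first show that every arc $a$ of the SOD is contained in exactly two swirls, one clockwise and one counterclockwise. Given $a$ and one of its endpoints $v$, where $a$ feeds into some arc $b$ (by A2), the choice of CW or CCW turn at $v$ uniquely determines the direction of continuation along $b$; iterating this yields deterministic walks that close into cycles by finiteness. Axiom A3 makes these walks well-defined regardless of the entry T-junction into $a$: since all feeders into $a$ come from a single side of $a$, the CW turn from any such feeder onto $a$ always points toward the same endpoint of $a$, so the CW swirl through $a$ is uniquely determined, and symmetrically for the CCW swirl. Consequently, every arc corresponds to a single edge of the swirl graph, joining the CW and CCW swirl that contain it. This immediately gives bipartiteness, with CW and CCW swirls as the partite sets, each of which is nonempty because the SOD is nonempty.

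For planarity, I would exploit the fact that each swirl bounds a spherically convex face of the arrangement (from \cref{thm:SOD}(3)): the boundary walk of such a face, when traversed with the face on a fixed side, alternates between straight-through crossings at interior T-junctions (where by A3 the feeder lies on the opposite side of the face) and turns at endpoints of arcs (where the current arc feeds into the next), with all turns in the same rotational direction by spherical convexity. Placing one vertex of the swirl graph inside each face and drawing each edge as a short curve crossing the corresponding arc once then yields an embedding on the sphere, hence a planar embedding.

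Simplicity would follow by contradiction: if a CW swirl $S_1$ and a CCW swirl $S_2$ shared two distinct arcs $a$ and $a'$, then the corresponding faces $F_1$ and $F_2$ would share two disjoint portions of their boundaries, which forces the sphere to decompose in a way that either violates \cref{thm:SOD}(2) (connectedness of the union of all arcs) or contradicts the spherical convexity of $F_1$ and $F_2$. The main obstacle, I expect, is the rigorous verification of the swirl–face bijection and the uniqueness of the swirl per orientation for each arc. In particular, one must carefully handle the degenerate configurations allowed by A1' (shared arc endpoints), and show that even when many feeders impinge on one side of an arc $a$ — subdividing that side into pieces bordering multiple arrangement faces — the single face whose boundary walk traverses the entire arc $a$ is well-identified as the one carrying the relevant swirl.
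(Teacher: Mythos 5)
A preliminary remark: the paper does not actually prove \cref{thm:swirls}; it is imported verbatim from \cite{vigliettaSOD} (hence the immediate end-of-proof mark), so there is no in-paper argument to compare yours against. Judged on its own, your proposal does contain the ingredients that give the easy parts: axiom A3 makes the clockwise successor of each arc well defined, so each arc lies in \emph{at most} one clockwise and at most one counterclockwise swirl, which yields bipartiteness and a well-defined edge set; and the always-turn-right and always-turn-left walks each eventually enter a cycle, which gives one swirl of each orientation and hence nonempty partite sets. But your foundational claim --- that \emph{every} arc lies in exactly two swirls, one of each orientation --- is false. In the paper's own tight example (\cref{fig:3}: eight arcs, four swirls), each swirl has at least three arcs, so there are at least twelve swirl--arc incidences; since each arc lies in at most one swirl of each orientation and (by the theorem) at most four arcs can be shared between two of the four swirls, exactly four of the eight arcs lie in a single swirl. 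The logical gap in your argument is the phrase ``close into cycles by finiteness'': determinism plus finiteness gives only \emph{eventual} periodicity, and the starting arc lies on the cycle only if the successor map is injective --- which it is not, since several arcs may feed into the same arc $b$ from its unique feeding side and all turn onto $b$ in the same direction. Consequently the asserted bijection between arcs and edges of the swirl graph does not exist; only the arcs lying in two swirls produce edges.

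The planarity construction also fails as described. Tracking orientations shows that \emph{both} eyes sharing an arc $a$ lie on the same side of $a$, namely the side from which all feeders of $a$ arrive (axiom A3): the predecessor of $a$ in the clockwise swirl turns right onto $a$, leaving its own body --- and hence the clockwise eye --- on the right of the travel direction, which is the feeder side; the counterclockwise predecessor turns left, leaving the counterclockwise eye on the left of the reversed travel direction, which is again the feeder side. The two eyes abut $a$ near opposite endpoints of that one side, so a short curve crossing $a$ transversally once lands on the feeder-free side, in neither eye. A correct drawing must route each edge along the length of $a$, and showing that the routings for different shared arcs can be made pairwise noncrossing is where the real work lies. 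Likewise, your simplicity step --- that two swirls sharing two arcs would ``force a decomposition violating connectedness or convexity'' --- names a desired contradiction without deriving one; the fact that eyes are spherically convex and bounded by arcs of great circles has to be used quantitatively here. In sum, the proposal establishes bipartiteness with nonempty partite sets, but simplicity and planarity, the substantive claims of \cref{thm:swirls}, remain unproved, and the arc-to-edge correspondence on which the rest is built is incorrect.
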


We will now extend \cref{thm:swirls} by proving that the swirl graph contains at least four swirls. The spherically convex region enclosed by a swirl is called the \emph{eye} of that swirl. For example, \cref{fig:4} shows the eye of a swirl in yellow.

\begin{lemma}\label{l:lemma}
Given a SOD $\mathcal S$, every hemisphere contains the eye of at least one swirl of $\mathcal S$.
\end{lemma}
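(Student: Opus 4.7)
The plan is to argue by contradiction: fix a closed hemisphere $H$ with bounding great circle $C$ and opposite open hemisphere $H' = S \setminus H$; suppose no swirl of $\mathcal S$ has its eye inside $H$, and derive a purely combinatorial impossibility. First I would reduce to the \emph{general position} case in which $C$ contains neither an entire arc of $\mathcal S$ nor an endpoint of any arc: since the set of poles violating this condition is a finite union of great circles on $S$, a small rotation of the pole of $H$ achieves general position, and the original statement follows by a pigeonhole argument on the finitely many faces of $\mathcal S$ together with a pointwise-limit argument on the pole. I would also use the fact that the boundary of every face of a SOD is a swirl, so ``no eye in $H$'' is equivalent to ``no face of $\mathcal S$ contained in $H$''.

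In the general-position setting, spherical convexity gives a clean classification. By axiom (A1') every arc is shorter than a great semicircle, so it meets $C$ transversally in at most one interior point; thus the $n$ arcs partition into $n_H$ arcs lying in $H \setminus C$, $n_{H'}$ arcs lying in $H'$, and $m$ arcs crossing $C$. Each open face is spherically convex, so its intersection with $C$ is empty or a single open arc of $C$; hence the $m$ crossing points cut $C$ into exactly $m$ arcs, each sitting in a distinct \emph{crossing face}. Under the hypothesis, no face has its interior in $H \setminus C$ (else its closure would lie inside $H$), so the remaining $n + 2 - m$ faces all have interior in $H'$.

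The crux will be a bipartite arc--face incidence count. Any arc in $H \setminus C$ or crossing $C$ has a neighborhood meeting $H \setminus C$, so both of its adjacent faces are crossing; consequently every boundary arc of an $H'$-face must itself lie in $H'$, giving at most $2 n_{H'}$ arc--face incidences among the $H'$-faces. On the other hand, the boundary of every face is a cycle in which consecutive arcs are related by ``feeds into'' (at each T-junction, the ending arc feeds into the continuing one), and by Theorem~\ref{thm:SOD}(1) no two arcs feed into each other, so every face boundary has length at least $3$. Therefore $3(n + 2 - m) \le 2 n_{H'}$, which, using $n = n_H + n_{H'} + m$, simplifies to $3 n_H + n_{H'} + 6 \le 0$, a contradiction.

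The step I expect to be most delicate is the claim that the boundary of every face is a genuine swirl cycle of length at least $3$: this needs a uniform local analysis at every T-junction to confirm that tracing $\partial f$ follows the ``feeds into'' direction consistently, so that Theorem~\ref{thm:SOD}(1) really forbids $2$-cycles. The general-position reduction and the incidence count itself are otherwise routine bookkeeping.
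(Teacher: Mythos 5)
There is a genuine gap, and it sits exactly at the step you yourself flagged as delicate: the claim that the boundary of every face of a SOD is a swirl is false, so ``some face is contained in $H$'' is \emph{not} equivalent to ``some eye is contained in $H$''. At a T-junction where an arc $a$ ends at a point $p$ in the relative interior of an arc $b$, the two faces occupying the corners between $a$ and the two halves of $b$ behave differently: tracing the boundary of one of them passes from the endpoint of $a$ onto $b$ (the ``feeds into'' direction), while tracing the boundary of the other passes from the interior of $b$ onto the endpoint of $a$ (against that direction). A generic face therefore mixes forward and backward transitions along its boundary and is not a swirl. This is also visible globally: a SOD with $n$ arcs has $n+2$ faces (\cref{thm:SOD}(3)) but typically far fewer swirls --- the eight-arc SOD of \cref{fig:2} has ten faces and only eight swirls --- so faces and eyes cannot be identified. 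Consequently your contradiction argument, even granting all the counting, establishes only that every closed hemisphere contains a \emph{face} of the arrangement; it does not locate the eye of a swirl inside $H$, which is what \cref{l:lemma} asserts and what \cref{thm:swirls2} and \cref{cor:8arcs} actually consume (they rely on swirls having at least three arcs and on the bipartite swirl graph of \cref{thm:swirls}).

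The rest of your argument is essentially sound and is a genuinely different, global route to the weaker statement: the general-position reduction is fine; the secondary use of the false claim (that every face has at least three boundary arcs) survives because a two-arc face would be a lens, excluded by axiom A1' and \cref{thm:SOD}(1); and the incidence count $3(n+2-m)\le 2n_{H'}$ does produce the contradiction $3n_H+n_{H'}+6\le 0$. What is missing is the bridge from a face (or any region) inside $H$ to an eye inside $H$, and that bridge is the actual content of the paper's proof: it constructs a closed ``forward'' walk inside $H$ (always proceeding from the endpoint of the current arc into the arc that blocks it, using A1' to stay in $H$), which encloses a region $A\subseteq H$ traversed consistently clockwise, and then a second walk that always turns right, which provably remains inside $A$ and closes up into a clockwise swirl whose eye lies in $A\subseteq H$. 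Starting a turn-right walk from the boundary of an arbitrary face contained in $H$ has no reason to stay in $H$, so the gap is not cosmetic; you would need to graft something like the paper's two-walk device onto your counting argument to finish.
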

\begin{proof}
Let $H$ be the interior of a hemisphere; by \cref{thm:SOD}(4), there is an arc $a_0\in\mathcal{S}$ that intersects $H$. Let us construct a ``walk'' within $H$ that traverses some arcs of $\mathcal S$ as follows (refer to \cref{fig:5}). The walk starts from a point of $a_0 \cap \partial H$. Note that at least one endpoint $p_0$ of $a_0$ is in $H$, otherwise $a_0$ would not be shorter than a great semicircle, contradicting axiom~A1. The walk follows $a_0$ to $p_0$, and then continues in this fashion: upon reaching an endpoint $p_i\in a_i\cap H$ of an arc $a_i$, proceed into the arc $a_{i+1}$ which blocks $a_i$ at $p_i$. Pick any endpoint $p_{i+1}$ of $a_{i+1}$ contained in $H$ (such an endpoint exists due to axiom~A1) and follow $a_{i+1}$ to $p_{i+1}$, and so on.

\begin{figure}[h]
\centering
\includegraphics[height=6cm]{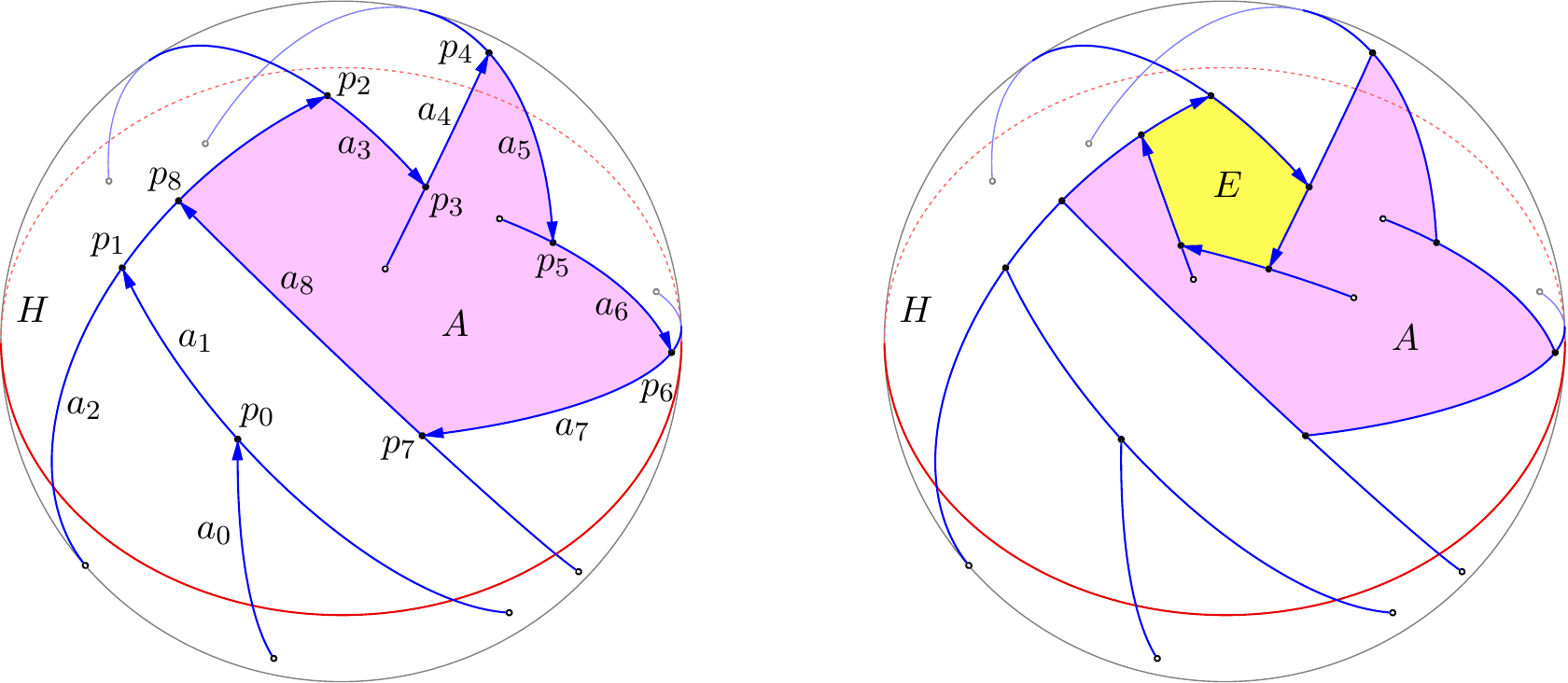}
\caption{An example of the walk within a hemisphere $H$ constructed in \cref{l:lemma}. The walk eventually encloses an area $A$ (left picture). If $A$ is on the right-hand side of the walk (i.e., if the walk travels around $A$ in the clockwise direction), then following the SOD starting from the boundary of $A$ and always turning right upon reaching the endpoint of the current arc eventually traces out the eye $E$ of a swirl which is entirely contained in $H$ (right picture).}
\label{fig:5}
\end{figure}

Since $\mathcal S$ has finitely many arcs, the walk eventually reaches a point that it has already visited. As soon as this happens, the walk has enclosed a region $A\subseteq H$. Assume, without loss of generality, that the walk travels around the boundary of $A$ in the clockwise direction, as in the left picture of \cref{fig:5} (if it is in the other direction, a symmetric argument applies).

Let us construct a second walk as follows. Starting from any point on the boundary of $A$, say on arc $a_i$, follow the first walk until an endpoint of $a_i$ is reached. Then turn right into the next arc of $\mathcal S$, follow it to its endpoint, and so on. Since the second walk always turns right, it eventually traces out the eye $E$ of a clockwise swirl, as shown in the right picture of \cref{fig:5}. Moreover, it is easy to see that the second walk is bound to remain within $A$, and therefore $E\subseteq A\subseteq H$.
\end{proof}

\begin{figure}[h]
\centering
\includegraphics[height=6cm]{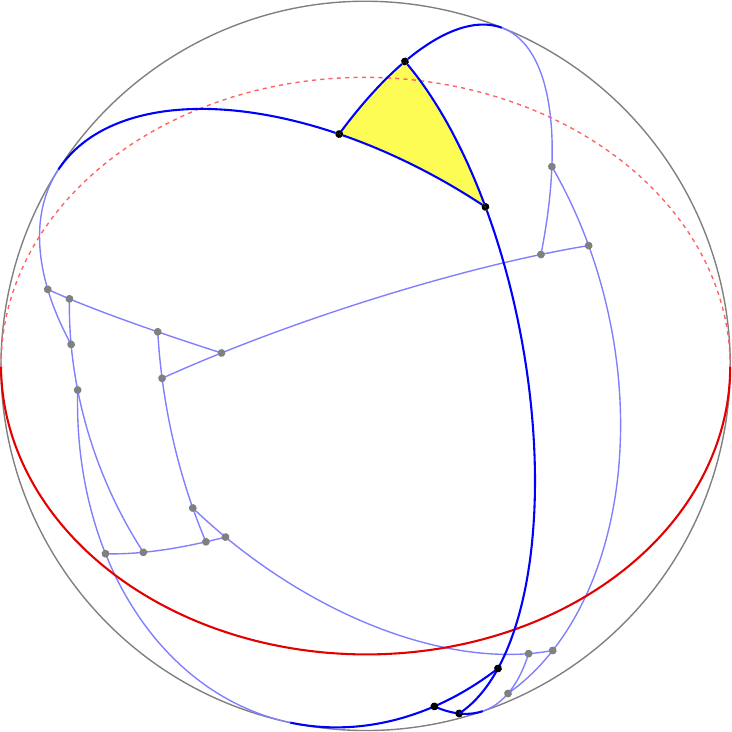}
\caption{A SOD where a hemisphere contains the eye of exactly one swirl (in yellow).}
\label{fig:4}
\end{figure}

Note that there are cases where a hemisphere contains the eye of exactly one swirl of a SOD, as shown in \cref{fig:4}.

\begin{theorem}\label{thm:swirls2}
Every SOD has at least four swirls.
\end{theorem}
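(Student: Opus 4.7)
The plan is to argue by contradiction, combining \cref{thm:swirls} and \cref{l:lemma}. By \cref{thm:swirls}, the swirl graph is bipartite with nonempty partite sets, so there is at least one clockwise and one counterclockwise swirl; this already guarantees at least two swirls. I will rule out the cases of exactly two and exactly three swirls with a single geometric argument based on the openness and spherical convexity of eyes.

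Suppose for contradiction that the SOD has $k \in \{2,3\}$ swirls, with eyes $E_1,\ldots,E_k$; these are pairwise disjoint open spherically convex regions of the sphere. Pick interior points $p_1 \in E_1$ and $p_2 \in E_2$. Because $E_1$ is a two-dimensional open set, I can perturb $p_1$ inside $E_1$ to guarantee that $p_1$ and $p_2$ are not antipodal, so there is a unique great circle $C$ through $p_1$ and $p_2$, bounding two open hemispheres $H_1, H_2$. Since $p_i \in C$ lies interior to the open set $E_i$, a neighborhood of $p_i$ inside $E_i$ meets both hemispheres; hence neither $E_1$ nor $E_2$ is contained entirely in $H_1$ or entirely in $H_2$.

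Now invoke \cref{l:lemma}: each $H_j$ must contain the eye of some swirl, which must be one of $E_1,\ldots,E_k$. Since $E_1$ and $E_2$ are split by $C$, neither of them can serve. In the case $k=2$ no candidate remains, giving an immediate contradiction. In the case $k=3$, the eye filling each $H_j$ must be $E_3$; but the disjoint hemispheres $H_1,H_2$ cannot both contain $E_3$, and if $E_3$ itself happens to be split by $C$, then neither hemisphere has any candidate eye. Either way a contradiction arises, so the SOD must have at least four swirls.

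The only delicate point is ensuring that $C$ genuinely splits both $E_1$ and $E_2$, which reduces to the non-antipodality perturbation inside the open set $E_1$ combined with the fact that each eye is open; both are routine. I do not anticipate any substantive obstacles—the whole argument exploits the single observation that two open sets simultaneously pierced by a great circle block the lemma from producing eyes on either side.
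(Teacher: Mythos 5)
Your argument is correct and is essentially the paper's proof: both take a great circle through the interiors of two distinct eyes (guaranteed by \cref{thm:swirls}) and apply \cref{l:lemma} to each of the two resulting open hemispheres to force two further swirls. Your added care about non-antipodality and the openness of eyes, and the framing as a contradiction over $k\in\{2,3\}$, are only cosmetic differences.
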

\begin{proof}
By \cref{thm:swirls}, there are at least one clockwise swirl $\mathcal W_1$ and at least one counterclockwise swirl $\mathcal W_2$. Consider a great circle $G$ that intersects the interiors of the eyes of both $\mathcal W_1$ and $\mathcal W_2$. Then $G$ determines two hemispheres $H_1$ and $H_2$, none of which entirely contains the eye of $\mathcal W_1$ or $\mathcal W_2$. Hence, by \cref{l:lemma}, there must be a third swirl $\mathcal W_3$ whose eye is entirely contained in $H_1$, and a fourth swirl $\mathcal W_4$ whose eye is entirely contained in $H_2$.
\end{proof}

As a consequence we have the following corollary, which yields a weaker version of our main result, \cref{thm:main1}:

\begin{corollary}\label{cor:8arcs}
Every SOD has at least eight arcs.
\end{corollary}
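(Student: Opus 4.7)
The plan is to derive the bound $n \geq 8$ (where $n$ denotes the number of arcs of the SOD) by combining the four-swirl lower bound of \cref{thm:swirls2} with the structural properties of the swirl graph $\mathcal G$ from \cref{thm:swirls} (simple, planar, bipartite, with nonempty partite sets). The first step is to establish a bijection between arcs of the SOD and edges of $\mathcal G$: every arc belongs to exactly one clockwise swirl and one counterclockwise swirl, obtained respectively by iterating the right-turn and left-turn successor operations on arcs. These two swirls lie in opposite partite sets, so the arc contributes exactly one edge to $\mathcal G$; conversely, simplicity of $\mathcal G$ forbids two swirls from sharing more than one arc, so each edge comes from a unique arc. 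In particular $n = |E(\mathcal G)|$.

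Next I would lower-bound the vertex degrees of $\mathcal G$. Every swirl is a cycle of arcs of length at least three, since length-one cycles are impossible (an arc has two distinct endpoints and cannot feed into itself) and length-two cycles are ruled out by \cref{thm:SOD}(1). Hence every vertex of $\mathcal G$ has degree at least $3$, and if $\mathcal G$ has $s$ vertices then $2n \geq 3s$. The final step is to rule out $s = 4$: any four-vertex bipartite simple graph with nonempty parts has part-sizes $(2,2)$ or $(1,3)$, admitting at most $4$ and $3$ edges respectively, both falling short of the $\lceil 3\cdot 4/2 \rceil = 6$ forced by the degree bound. Combined with \cref{thm:swirls2}, this forces $s \geq 5$, yielding $n \geq \lceil 3\cdot 5/2 \rceil = 8$, as claimed.

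The main delicate point is the arc--edge bijection in the first step, which requires verifying (from axioms A1', A2, A3) that the right-turn and left-turn successor operations on arcs are well-defined bijections, so that each arc lies in a unique clockwise swirl and a unique counterclockwise swirl, and that these two swirls are distinct. Everything afterward is routine graph-theoretic counting.
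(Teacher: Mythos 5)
Your argument rests on the claim that every arc belongs to exactly one clockwise swirl and one counterclockwise swirl, so that arcs are in bijection with edges of the swirl graph $\mathcal G$ and every swirl of length $k$ has degree $k$ in $\mathcal G$. This is false, and it is exactly the point where the proof breaks. An arc feeds into two arcs (one at each endpoint), but the cycles obtained by iterating the right-turn or left-turn rule from a given arc need not both be swirls containing that arc; in general some arcs are shared by two swirls and some belong to only one (or, depending on the convention, to none on one side). The paper's own tight example---a SOD with exactly eight arcs and exactly four swirls, each swirl a triangle of arcs---refutes your bijection directly: there the swirl graph is a simple bipartite graph on four vertices and hence has at most four edges, so at most four of the eight arcs are shared by two swirls, whereas your claim would force all eight arcs to be shared and give $|E(\mathcal G)|=8$ on four vertices. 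The same example kills your degree bound (each swirl there has degree $2$, not $\geq 3$, in $\mathcal G$) and your conclusion $s\geq 5$, which contradicts the stated tightness of \cref{thm:swirls2}. Indeed, if your premises were correct they would cascade (via planarity and bipartiteness) to $s\geq 7$ and beyond, a sign that something is inconsistent.

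The paper's proof uses the swirl graph quite differently, via inclusion--exclusion rather than a degree count: take any four swirls (they exist by \cref{thm:swirls2}), each contributing at least three arcs; since the swirl graph is simple and bipartite, at most four arcs can be shared between two of these four swirls, so they involve at least $4\cdot 3-4=8$ distinct arcs. Your graph-theoretic machinery after the first step is fine arithmetic, but it is computing with a quantity ($|E(\mathcal G)|$) that does not equal the number of arcs, so the argument does not establish the corollary.
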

\begin{proof}
Let $\mathcal S$ be a SOD; by \cref{thm:swirls2}, $\mathcal S$ has at least four swirls, each of which consists of at least three arcs. Consider four arbitrary swirls, and all arcs in $\mathcal{S}$ that are incident to at least one of these swirls. Recall that the swirl graph is simple and planar due to \cref{thm:swirls}. A simple planar bipartite graph on four vertices has at most four edges, hence at most four of these arcs are incident to two of these swirls. Thus, these four swirls involve at least $4\cdot 3- 4=8$ distinct arcs. We conclude that $\mathcal S$ has at least eight arcs.
\end{proof}
\cref{fig:2} shows an example of a SOD with exactly eight arcs and four swirls, proving that \cref{thm:swirls2,cor:8arcs} are tight.

\begin{figure}[h]
\centering
\includegraphics[height=6cm]{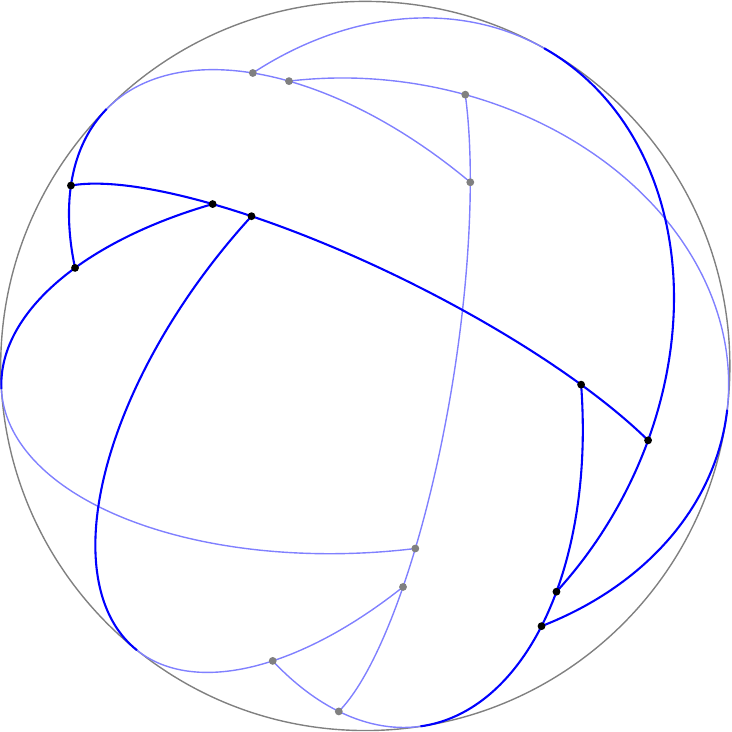}
\caption{A SOD with eight arcs and four swirls}
\label{fig:3}
\end{figure}

\section{Proof of \cref{thm:main1,thm:main2}}\label{sec:main}

We will now prove both \cref{thm:main1} and \cref{thm:main2}.

\paragraph{Upper bound.} Let $\mathcal P$ be a polygonal scene (possibly the facets of a polyhedron). Let $p$ be any point that sees no vertices of $\mathcal P$, and let $\mathcal S$ be the SOD generated by $\mathcal P$ with viewpoint $p$. We know from \cref{cor:8arcs} that $\mathcal S$ consists of at least eight arcs. However, this is insufficient to conclude that $p$ sees at least eight distinct edges of $\mathcal P$, because some arcs of $\mathcal S$ may be projections of sub-segments of the same edge of $\mathcal P$.

We will first give some definitions. Recall that a \emph{great semicircle} is an arc consisting of half of a great circle on the unit sphere. A \emph{semicircle cover} $\mathcal C$ of a SOD $\mathcal S$ is a set of great semicircles such that each arc of $\mathcal S$ is contained in the relative interior of a great semicircle in $\mathcal C$.

\begin{lemma}\label{l:semi}
Every semicircle cover of a SOD consists of at least eight great semicircles.
\end{lemma}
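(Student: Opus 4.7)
The plan is to mirror the proof of Corollary~\ref{cor:8arcs}, replacing ``arcs'' with ``great semicircles'' throughout. The structural ingredient needed is the following key claim: no two arcs of the same swirl lie on the same great circle.

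I would prove the key claim as follows. Suppose $a_i$ and $a_j$ are two arcs of a swirl $\mathcal W$ both lying on a common great circle $G$. Since the eye $E$ of $\mathcal W$ is spherically convex and both $a_i,a_j$ lie on its boundary supporting it from the same side of $G$, the eye is contained in one closed hemisphere bounded by $G$; hence $\overline E\cap G$ is a spherically convex subset of $G$, i.e., a single sub-arc. The boundary of $E$ is parameterized cyclically by the swirl arcs (each $a_l$ contributes the portion from the feed-in point of $a_{l-1}$ to its own endpoint), so the arcs whose boundary portions lie on $G$ form a \emph{consecutive} subsequence of the swirl cyclic order. In particular two consecutive swirl arcs $a_l$ and $a_{l+1}$ both lie on $G$. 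But if $a_l$ feeds into $a_{l+1}$ at a point $p$ in the interior of $a_{l+1}$, then $a_{l+1}$ extends along $G$ in both directions from $p$, and one of those directions overlaps $a_l^\circ$, violating the internal disjointness required by axiom~(A1'). The claim follows, so the at least three arcs of any swirl are covered by at least three distinct semicircles of $\mathcal C$.

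To finish I would adapt the counting argument from Corollary~\ref{cor:8arcs}. By Theorem~\ref{thm:swirls2} there are at least four swirls $\mathcal W_1,\ldots,\mathcal W_4$; let $T_i\subseteq\mathcal C$ denote the set of semicircles that cover at least one arc of $\mathcal W_i$, so $|T_i|\geq 3$. By the key claim each swirl has at most one arc on any given great circle, so a semicircle $c$ lies in $T_i\cap T_j$ precisely when it covers arcs from both $\mathcal W_i$ and $\mathcal W_j$. I would then show that the graph on $\{\mathcal W_1,\ldots,\mathcal W_4\}$ whose edges are these ``shared'' semicircles is simple, bipartite, and planar---mirroring Theorem~\ref{thm:swirls}---so it has at most four edges. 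Inclusion--exclusion would then yield $|\mathcal C|\geq|T_1\cup\cdots\cup T_4|\geq 4\cdot 3-4=8$.

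The main obstacle is establishing the bipartite-planar properties of the shared-semicircle graph. In Theorem~\ref{thm:swirls} this is done for shared \emph{arcs}, but a shared semicircle need not correspond to a shared arc: it could cover two distinct arcs from two swirls that merely happen to lie on the same great circle. I would therefore need to verify that (i) no semicircle covers arcs from three or more swirls (so the sharing structure is a graph, not a hypergraph), (ii) two swirls sharing a semicircle must have opposite orientations (bipartiteness), and (iii) the induced graph embeds on the sphere without crossings (planarity). Each should follow from a careful analysis of how swirl eyes can be arranged relative to the common great circle, using spherical convexity of eyes together with Lemma~\ref{l:lemma}.
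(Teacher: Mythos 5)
Your key claim---that no two arcs of a single swirl can lie on the same great circle---is correct, and your convexity argument for it is essentially sound. The genuine gap is exactly the step you flag as ``the main obstacle'': properties (i)--(iii) of the shared-semicircle graph are not routine verifications, they are the entire difficulty, and at least two of them are in serious doubt. \cref{thm:swirls} controls arcs \emph{shared} by two swirls; a semicircle shared by two swirls in your sense may instead cover two \emph{distinct} arcs that merely happen to be cocircular, and the swirl graph says nothing about these new adjacencies. For bipartiteness: two swirls whose eyes both lie in the closed hemisphere $H^+$ bounded by $G$, supported by two different arcs of $G$ traversed in the same direction, would have the \emph{same} orientation; axiom A3 constrains each arc separately and nothing in your argument forces the two local pictures to be mirror images rather than translates of one another along $G$. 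For the hypergraph issue: a single semicircle could cover two cocircular arcs each of which borders two swirls, placing it in up to four of the sets $T_i$ and contributing $6$ to $\sum_{i<j}|T_i\cap T_j|$, which alone destroys the bound. And without bipartiteness, a simple planar graph on four vertices can be $K_4$, so Bonferroni yields only $4\cdot 3-6=6$, not $8$. The proof is therefore incomplete at its decisive point.

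The paper avoids the swirl graph here entirely. It argues by well-founded induction on arc removal: if some arc blocks nothing, delete it and invoke the inductive hypothesis; otherwise either no semicircle contains two arcs (and \cref{cor:8arcs} finishes), or some great semicircle $C$ of a great circle $G$ contains two arcs $a_1,a_2$. In the latter case one exhibits seven further arcs---the four distinct arcs that $a_1,a_2$ hit, two arcs that they block, and one arc meeting $G\setminus C$ (guaranteed by \cref{thm:SOD}(4))---each of which touches $G$ but is not contained in $G$. Since the relative interior of a great semicircle not lying on $G$ meets $G$ in at most one point (the two intersection points of distinct great circles are antipodal), no semicircle can cover two of these seven arcs, so they force seven semicircles in addition to the one covering $a_1$ and $a_2$. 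If you wish to salvage your route you must prove (i)--(iii) from scratch; I would instead recommend the paper's direct counting argument.
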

\begin{proof}
We say that a SOD is \emph{good} if all of its semicircle covers consist of at least eight great semicircles. We will prove that all SODs are good.

Consider the \emph{arc-addition} relation $\mathcal S_1\prec \mathcal S_2$ between SODs, meaning that the SOD $\mathcal S_2$ can be obtained by adding a single arc to the SOD $\mathcal S_1$. Since SODs have finitely many arcs, the arc-addition relation is well-founded (i.e., it has no infinite decreasing chains); also, if $\mathcal S_1\prec \mathcal S_2$ and $\mathcal S_1$ is a good SOD, then so is $\mathcal S_2$. Thus, we will prove that all SODs are good by well-founded induction with respect to the arc-addition relation.

Let $\mathcal S$ be a SOD. If there is an arc $a\in\mathcal S$ that does not block any arc of $\mathcal S$, then removing $a$ yields another SOD $\mathcal S'$, which is good by the inductive hypothesis. Since $\mathcal S'\prec \mathcal S$, we conclude that $\mathcal S$ is good, as well.

So, assume that all arcs of $\mathcal S$ block other arcs of $\mathcal S$. If there is no great semicircle whose relative interior contains two arcs of $\mathcal S$, then $\mathcal S$ is good, due to \cref{cor:8arcs}. We may therefore assume that there are two arcs $a_1,a_2\in \mathcal S$ that lie in the relative interior of a same great semicircle $C$, as shown in \cref{fig:6}.

\begin{figure}
\centering
\includegraphics[height=6cm]{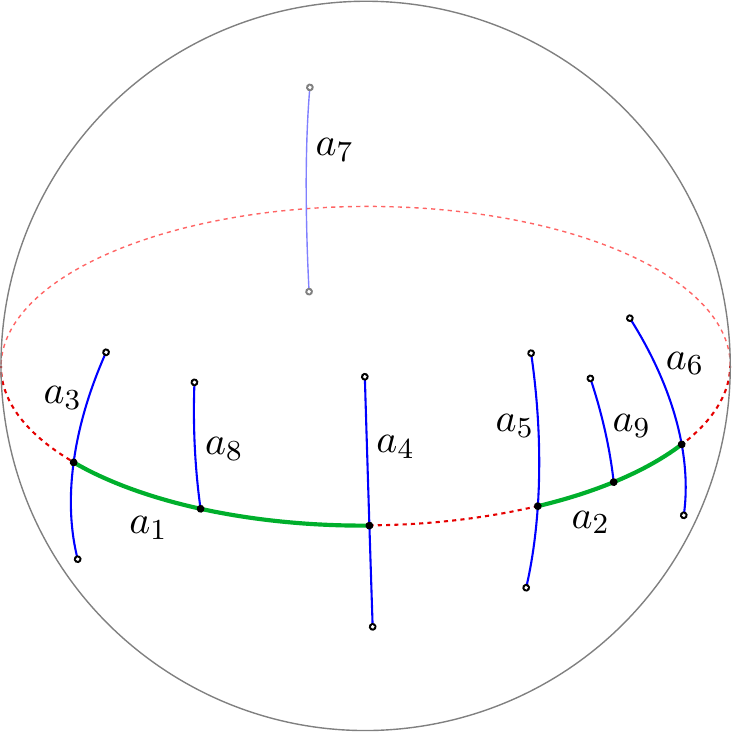}
\caption{If a great circle contains two arcs $a_1$ and $a_2$ that are projections of sub-segments of a same edge of an arrangement of polygons, there must be seven additional arcs touching the same great circle.}
\label{fig:6}
\end{figure}

The arcs $a_1$ and $a_2$ must hit four distinct arcs $a_3,a_4,a_5,a_6\in\mathcal S$. Let $G$ be the great circle containing $C$. Due to \cref{thm:SOD}(4), there exists an arc $a_7\in\mathcal S$ that intersects $G$ outside of $a_1$ and $a_2$, for otherwise the great semicircle $G\setminus C$ would have no intersections with $\mathcal S$. Also, we may assume that $a_7$ is not contained in $G$ (if it is, pick any arc blocked by $a_7$ instead). Finally, let $a_8,a_9\in\mathcal S$ be two arcs blocked by $a_1$ and $a_2$, respectively.

Note that the seven arcs $a_3, a_4, \dots, a_9$ are all distinct and no great semicircle's relative interior completely contains more than one of them. This is because they all touch the great circle $G$ and none of them lies in $G$ (recall that each arc of a SOD is shorter than a great semicircle, due to axiom~A1). Thus, seven great semicircles are required to cover these arcs, and one additional great semicircle is required for $a_1$ and $a_2$. Therefore, $\mathcal S$ is good.
\end{proof}

\begin{corollary}\label{cor:final}
Every point that does not see any vertex of a polygonal scene $\mathcal P$ sees positive portions of at least eight distinct edges of polygons in  $\mathcal P$.
\end{corollary}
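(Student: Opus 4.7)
The plan is to deduce \cref{cor:final} from \cref{l:semi} by converting the set of visible edges into an explicit semicircle cover of the SOD. Fix a point $p$ that sees no vertex of the polygonal scene $\mathcal{P}$, and let $\mathcal{S}$ be the SOD generated by $\mathcal{P}$ with viewpoint $p$. Let $E$ be the set of edges of polygons of $\mathcal{P}$ of which $p$ sees a sub-segment of positive length; since every arc of $\mathcal{S}$ is by definition the orthographic projection of some such sub-segment, the target statement is exactly $|E|\geq 8$.

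First I would establish the per-edge geometric fact that, for every $e\in E$, the orthographic projection of $e$ onto the unit sphere centered at $p$ is a single circular arc strictly shorter than a great semicircle. The reason is that, since $p$ sees no vertex of $\mathcal{P}$, in particular $p$ sees neither endpoint of $e$, which forces $p$ to lie off the line spanned by $e$. Hence, if $a$ and $b$ are the endpoints of $e$, the angle $\angle apb$ is strictly less than $\pi$, and the projection of $e$ is an arc of the corresponding great circle that is contained in the relative interior of some great semicircle $C_e$.

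Next I would set $\mathcal{C}=\{C_e:e\in E\}$ and check that $\mathcal{C}$ is a semicircle cover of $\mathcal{S}$ in the sense of \cref{sec:main}. Indeed, every arc of $\mathcal{S}$ is the projection of a visible sub-segment of some edge $e\in E$, and hence lies within the projection of $e$, which in turn lies in the relative interior of $C_e\in\mathcal{C}$. Applying \cref{l:semi} to $\mathcal{C}$ gives $|\mathcal{C}|\geq 8$, and since $e\mapsto C_e$ is a well-defined map from $E$ onto $\mathcal{C}$, we obtain $|E|\geq|\mathcal{C}|\geq 8$, which is the desired conclusion.

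The conceptual heavy lifting has already been done in \cref{l:semi} by the well-founded induction on arc addition, so this last step is essentially a dictionary between SOD language and polygonal-scene language. The only subtlety worth flagging is that two distinct edges could a priori project into the same great semicircle (for example, if they happen to be close to collinear as seen from $p$); but such coincidences only identify elements of $\mathcal{C}$ and not of $E$, so the inequality $|E|\geq|\mathcal{C}|$ continues to point in the favorable direction.
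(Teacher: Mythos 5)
Your proposal is correct and follows essentially the same route as the paper: build a semicircle cover of the SOD from the edges visible to $p$ (the paper takes $C_e$ to be the projection of the full line through $e$, while you observe the projection of $e$ is shorter than a semicircle and pick any containing one) and then invoke \cref{l:semi}. The only cosmetic difference is in the counting direction — you bound $|E|\geq|\mathcal C|$ while the paper argues distinct semicircles come from distinct edges — but both yield the same conclusion.
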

\begin{proof}
Let $p$ be any point that does not see any vertex of $\mathcal P$. Let $\mathcal S$ be the SOD generated by $\mathcal P$ with viewpoint $p$; for each edge $e$ of $\mathcal P$ that is visible to $p$, consider the line $\ell_e$ containing $e$. The projection of $\ell_e$ onto the unit sphere centered at $p$ is a great semicircle $C_e$ whose relative interior contains all the arcs of $\mathcal S$ corresponding to $e$. Thus, the set $\mathcal C$ of all great semicircles $C_e$ constructed as above for all edges $e$ visible to $p$ is a semicircle cover of $\mathcal S$. Since two distinct great semicircles $C_e,C_{e'}\in \mathcal C$ must correspond to distinct edges $e$ and $e'$, we conclude that $\mathcal S$ contains arcs corresponding to least $|\mathcal C|$ distinct edges of $\mathcal P$, all of which are visible to $p$. Due to \cref{l:semi}, we have $|\mathcal C|\geq 8$, and so $p$ sees at least eight edges of $\mathcal P$. To prove that $p$ sees \emph{positive portions} of such edges, it is sufficient to observe that the arcs of a SOD have positive length, and therefore are projections of positive portions of edges of $\mathcal P$.
\end{proof}

\begin{figure}[h]
\centering
\includegraphics[width=0.8\textwidth]{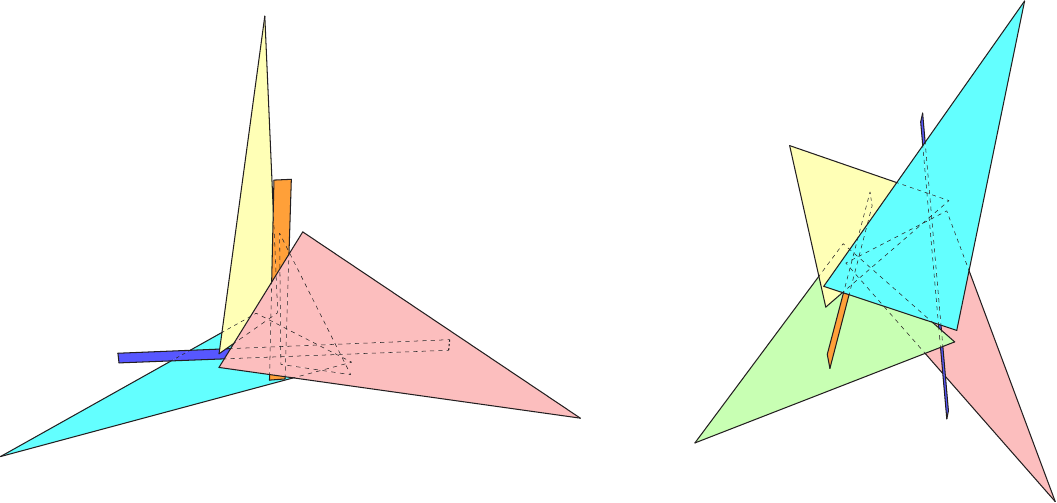}
\caption{Two views of a polygonal scene of six internally disjoint polygons where the central point sees no vertices and exactly eight edges}
\label{fig:7}
\end{figure}

\paragraph{Lower bound.}
We will now construct a polygonal scene $\mathcal P$ of six internally disjoint polygons matching the upper bound in \cref{cor:final}. The polygonal scene is depicted in \cref{fig:7}, as seen from two different viewpoints. As it will turn out, the visibility map generated by $\mathcal P$ with viewpoint the center of the arrangement is combinatorially equivalent to the SOD in \cref{fig:3}.

The polygons in $\mathcal{P}$ are as follows:
\begin{itemize}
\item a rectangle $R_1$ with vertex coordinates $(5, \pm 1, \pm 15)$,
\item a rectangle $R_2$ with vertex coordinates $(-5, \pm 15, \pm 1)$,
\item a triangle $T_1$ with vertex coordinates $(15, -2, 35)$, $(7, 0, -8)$, $(-7, -8, 3)$,
\item a triangle $T_2$ with vertex coordinates $(-15, -35, -2)$, $(-7, 8, 0)$, $(7, -3, -8)$,
\item a triangle $T_3$ with vertex coordinates $(15, 2, -35)$, $(7, 0, 8)$, $(-7, 8, -3)$,
\item a triangle $T_4$ with vertex coordinates $(-15, 35, 2)$, $(-7, -8, 0)$, $(7, 3, 8)$.
\end{itemize}

Let $\mathcal P=\{R_1,R_2,T_1,T_2,T_3,T_4\}$. Observe that the involution $\phi_1\colon (x,y,z)\mapsto (x,-y,-z)$ maps $\mathcal P$ to itself, because it fixes $R_1$ and $R_2$ and exchanges $T_1$ with $T_3$ and $T_2$ with $T_4$.

Similarly, the isometry $\phi_2\colon (x,y,z)\mapsto (-x,-z,y)$ has period $4$ and maps $\mathcal P$ to itself. In particular, $\phi_2$ exchanges $R_1$ and $R_2$ and maps $T_i$ to $T_{i+1}$ for all $1\leq i\leq 4$, where addition over the indices is taken modulo $4$.

\begin{proposition}\label{thm:lowerbound1}
The polygons in the scene $\mathcal P=\{R_1,R_2,T_1,T_2,T_3,T_4\}$ are disjoint.
\end{proposition}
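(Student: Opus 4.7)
The plan is to exploit the two symmetries $\phi_1,\phi_2$ of $\mathcal P$ to reduce pairwise disjointness to a handful of coordinate computations, and then to verify each remaining case directly.

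First, $R_1\subset\{x=5\}$ and $R_2\subset\{x=-5\}$ trivially give $R_1\cap R_2=\emptyset$. A direct check shows $\phi_2^2=\phi_1$, so $\phi_2$ generates a cyclic group of order $4$ of isometries preserving $\mathcal P$ which cyclically permutes $T_1\to T_2\to T_3\to T_4$ and exchanges $R_1\leftrightarrow R_2$. Consequently the eight intersections $R_i\cap T_j$ lie in at most two orbits under $\langle\phi_2\rangle$, and the six intersections $T_i\cap T_j$ with $i<j$ lie in at most two orbits (one for ``adjacent'' indices and one for ``opposite'' indices). It therefore suffices to verify that $R_1\cap T_1$, $R_1\cap T_2$, $T_1\cap T_2$, and $T_1\cap T_3$ are empty.

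For the rectangle--triangle cases I would intersect each $T_i$ with the plane $\{x=5\}$. In both cases $T_i$ has one vertex on one side of this plane and two on the other, so $T_i\cap\{x=5\}$ is a segment whose two endpoints lie on the two edges of $T_i$ crossing the plane; a short calculation gives endpoint $y$-values $-52/11$ and $-8/7$ for $T_1$, and $-65/11$ and $-10/7$ for $T_2$, so in both cases $y<-1$ along the whole segment and the segment misses $R_1=\{x=5,\ |y|\le 1,\ |z|\le 15\}$.

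For the triangle--triangle cases I would compute the supporting planes $\pi_i,\pi_j$, the intersection line $L=\pi_i\cap\pi_j$, and then the two sub-segments $T_i\cap L$ and $T_j\cap L$, whose overlap is $T_i\cap T_j$. For $T_1\cap T_3$ the involution $\phi_1$ swaps the two triangles, hence leaves $L$ invariant and exchanges the two sub-segments: computing $L=\{x=65/7,\ 2z=-15y\}$, one finds $T_3\cap L\subset\{y>0\}$, and therefore $T_1\cap L=\phi_1(T_3\cap L)\subset\{y<0\}$, so the sub-segments are disjoint. For $T_1\cap T_2$ no such symmetry is available, so the two sub-segments of $L$ must be computed and compared directly.

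The main obstacle is the last case $T_1\cap T_2$: the vertices of $T_2$ have mixed signs with respect to the plane of $T_1$ (and symmetrically for $T_1$ versus $\pi_2$), the resulting edge parameters are not especially clean fractions, and no symmetry cuts the work, so one is left with an arithmetic verification that the two sub-segments of $L$ do not overlap.
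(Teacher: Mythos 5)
Your symmetry reduction (noting $\phi_2^2=\phi_1$, so the four representative cases are $R_1\cap T_1$, $R_1\cap T_2$, $T_1\cap T_2$, $T_1\cap T_3$) and your two rectangle--triangle computations are exactly what the paper does, down to the same $y$-values $-52/11,-8/7$ and $-65/11,-10/7$. Where you diverge is in the triangle--triangle cases: you propose the generic method of intersecting the two supporting planes in a line $L$ and comparing the sub-segments $T_i\cap L$ and $T_j\cap L$, whereas the paper uses two ad hoc sign/projection observations. For $T_1\cap T_3$ the paper simply notes that $T_1\subset\{y\le 0\}$ with equality only at the vertex $(7,0,-8)$ while $T_3\subset\{y\ge 0\}$ with equality only at $(7,0,8)$ --- no line $L$ needed; your $\phi_1$-based version of this case is correct but does the same work less directly. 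For $T_1\cap T_2$, the case you explicitly leave as an unexecuted ``arithmetic verification,'' the paper's trick is to project both triangles orthogonally onto the plane $y=0$: the projected triangles meet only at the single point $(7,0,-8)$, which is the image of the vertex $(7,0,-8)$ of $T_1$ but of the non-coincident vertex $(7,-3,-8)$ of $T_2$, so the triangles themselves are disjoint. Your common-line computation does in fact succeed here (one finds $T_1\cap L$ and $T_2\cap L$ are sub-segments of $L$ whose $y$-coordinates lie in the disjoint ranges $[-48/83,-4/235]$ and $[-5318/593,-38/31]$ respectively), so your plan is sound and complete in outline; but as written the proof is unfinished at precisely the point you flag as the main obstacle, and the projection argument is the cleaner way to close it.
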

\begin{proof}
The rectangles $R_1$ and $R_2$ are disjoint because they lie in distinct parallel planes: $x=5$ and $x=-5$, respectively.

The intersection between $T_1$ and the plane $x=5$ is a segment whose vertices have $y$-coordinate $-8/7$ and $-52/11$, respectively. Since both coordinates are smaller than $-1$, it follows that $R_1$ and $T_1$ are disjoint.

The intersection between $T_2$ and the plane $x=5$ is a segment whose vertices have $y$-coordinate $-10/7$ and $-65/11$. Both $y$-coordinates are smaller than $-1$, and therefore $R_1$ and $T_2$ are disjoint.

The orthogonal projections of $T_1$ and $T_2$ on the plane $y=0$ are two triangles that intersect only at their common vertex $(7,0,-8)$. Since this is a vertex of $T_1$ but not a vertex of $T_2$, it follows that $T_1$ and $T_2$ are disjoint.

Observe that all the points of $T_1$ have a negative $y$-coordinate except the vertex $(7,0,-8)$. On the other hand, the points of $T_3$ have a positive $y$-coordinate except the vertex $(7,0,8)$. Since the two vertices are distinct, the triangles $T_1$ and $T_3$ are disjoint.

Due to the symmetries of $\mathcal P$ given by $\phi_1$ and $\phi_2$, all other pairs of polygons in $\mathcal P$ are disjoint, as well.
\end{proof}

\begin{proposition}\label{thm:lowerbound2}
The point $p=(0,0,0)$ does not see any vertices of $\mathcal P=\{R_1,R_2,T_1,T_2,T_3,T_4\}$ and sees exactly eight of its edges.
\end{proposition}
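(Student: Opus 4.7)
The plan is to exploit the cyclic symmetry group of $\mathcal P$ generated by $\phi_2$, noting that $\phi_1 = \phi_2^2$ (since both map $(x,y,z)$ to $(x,-y,-z)$), so this group is cyclic of order $4$ and fixes $p = (0,0,0)$. Because the group preserves $\mathcal P$, it acts on the $20$ vertices of $\mathcal P$ in five orbits of size $4$ (two orbits of rectangle vertices, and one orbit for each of the three vertex types of $T_1$) and on the $20$ edges in five orbits of size $4$ (two of rectangle edges, three of triangle edges). Hence every visibility claim reduces to checking one representative per orbit, with all remaining cases obtained automatically by applying powers of $\phi_2$.

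First I would settle the vertex claim. Taking representatives $(5,1,15)$ and $(5,-1,15)$ from $R_1$ and $(15,-2,35)$, $(7,0,-8)$, $(-7,-8,3)$ from $T_1$, I would for each such vertex $v$ exhibit a specific polygon $P \in \mathcal P$ whose relative interior is pierced by the open segment $\{tv : t \in (0,1)\}$. Concretely, one intersects that segment with the supporting plane of $P$ to obtain a parameter $t^\ast \in (0,1)$, then checks via barycentric coordinates (for triangles) or via two linear inequalities (for rectangles) that the intersection point lies strictly inside $P$. Symmetry then upgrades the five representative blockings to blockings of all $20$ vertices.

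For the edge count, the same orbit analysis reduces the problem to five representative edges. Since Corollary~\ref{cor:final} already guarantees at least $8$ visible edges and each orbit contributes either $4$ or $0$ visible edges, exactly two orbits must be entirely visible and three entirely occluded; verifying this on representatives then yields precisely $2 \cdot 4 = 8$ visible edges. For each of the two visible orbits, I would produce a single point in the relative interior of a representative edge whose segment to $p$ crosses no polygon of $\mathcal P$ (by continuity a positive-length subsegment is then visible). For each of the three occluded orbits, I would display one polygon $P$ that screens the entire representative edge $e$: it suffices to check that both rays from $p$ through the two endpoints of $e$ enter $P$ at parameters strictly less than the parameter at which they reach the endpoints of $e$, so that the whole edge lies in the shadow cone of $P$ as seen from $p$.

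The main obstacle is the sheer volume of explicit coordinate arithmetic; there is no conceptual difficulty, but the long triangle edges in particular cut through a crowded region of the scene and require careful identification of the correct blocking polygon. Without the symmetry reduction the bookkeeping would be prohibitive, but thanks to the order-$4$ cyclic group the entire verification collapses to ten elementary linear-algebra checks (five for vertex orbits and five for edge orbits), after which the conclusion that $p$ sees no vertex and exactly eight edges of $\mathcal P$ is immediate.
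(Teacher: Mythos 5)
Your plan is correct and takes essentially the same route as the paper: exploit the symmetries $\phi_1$ and $\phi_2$ to reduce to a handful of representatives, occlude each representative vertex and a representative of three of the five edge classes by an explicit convex blocker (the paper uses $T_2$ and $T_4$ against the short edges of $R_1$, and $R_1$ and $T_4$ against two edges of $T_1$, verified via ray--plane intersection parameters and barycentric coordinates), and invoke Corollary~\ref{cor:final} for the matching lower bound of eight. The only differences are organizational --- you phrase the reduction as an orbit count under the order-$4$ cyclic group generated by $\phi_2$ rather than using $\phi_1$ and $\phi_2$ separately, and your direct visibility check for the two unoccluded orbits is redundant once the other three are blocked --- while the explicit arithmetic you defer is exactly what the paper's proof supplies.
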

\begin{proof}
Let $\alpha_i$ be the plane containing $T_i$, for $1\leq i\leq 4$. We will first show that the two vertices of $R_1$ with negative $z$-coordinate are occluded by $T_2$. Note that the ray $(5t,t,-15t)$ hits the vertex $(5,1,-15)$ of $R_1$ for $t=1$, and intersects the plane $\alpha_2$ with equation $7 x - 2 y + 15 z + 65 = 0$ for $t=65/192<1$ in the point $a_1=(325/192, 65/192, -325/64)$. Also, the ray $(5t,-t,-15t)$ hits the vertex $(5,-1,-15)$ of $R_1$ for $t=1$, and intersects $\alpha_2$ for $t=65/188<1$ in the point $a_2=(325/188, -65/188, -975/188)$. Observe that $a_1$ and $a_2$ can be expressed as a convex combination of the vertices of $T_2$ as follows:
\[
    a_1= \left(\frac{325}{192}, \frac{65}{192}, -\frac{325}{64}\right) = \frac{149}{8832}\cdot (-15, -35, -2) + \frac{519}{1472}\cdot (-7, 8, 0) + \frac{5569}{8832}\cdot (7, -3, -8),
\]
\[
    a_2= \left(\frac{325}{188}, -\frac{65}{188}, -\frac{975}{188}\right)=\frac{261}{8648}\cdot (-15, -35, -2) + \frac{1423}{4324}\cdot (-7, 8, 0) + \frac{5541}{8648}\cdot (7, -3, -8).
\]
Thus, $a_1$ and $a_2$ are in $T_2$, which implies that $T_2$ occludes both vertices of $R_1$ whose $z$-coordinate is negative. Since $T_2$ is convex, it occludes the entire edge of $R_1$ connecting these two vertices.

By the symmetry of $\mathcal P$ given by $\phi_1$, the edge of $R_1$ whose vertices have positive $z$-coordinate is occluded by $T_4$. Thus, all vertices of $R_1$ are occluded, and $p$ sees at most two edges of $R_1$. Due to the symmetry given by $\phi_2$, the same is true of $R_2$.

We will now show that $R_1$ occludes the two vertices of $T_1$ with positive $x$-coordinate. The plane $\beta$ containing $R_1$ has equation $x=5$; the ray $(15t,-2t,35t)$ hits the vertex $(15,-2,35)$ of $T_1$ for $t=1$ and intersects $\beta$ for $t=1/3<1$ in the point $b_1=(5,-2/3,35/3)$. The ray $(7t,0,-8t)$ hits the vertex $(7,0,-8)$ of $T_1$ for $t=1$ and intersects $\beta$ for $t=5/7<1$ in the point $b_2=(5,0,-40/7)$.  Since both $b_1$ and $b_2$ are in $R_1$, the two vertices of $T_1$ with positive $x$-coordinate are occluded. The edge of $T_1$ connecting them is also occluded, because $R_1$ is convex.

Finally, we will show that $T_4$ occludes the two vertices of $T_1$ with positive $z$-coordinate. The plane of $T_4$ is $\alpha_4$, with equation $7 x + 2 y - 15 z + 65 = 0$. The ray $(15t, -2t, 35t)$ hits the vertex $(15, -2, 35)$ of $T_1$ for $t=1$ and intersects $\alpha_4$ for $t=65/424<1$ in the point $c_1=(975/424, -65/212, 2275/424)$. The ray $(-7t, -8t, 3t)$ hits the vertex $(-7, -8, 3)$ of $T_1$ for $t=1$ and intersects $\alpha_4$ for $t=13/22<1$ in the point $c_2=(-91/22, -52/11, 39/22)$. Both $c_1$ and $c_2$ can be expressed as a convex combination of the vertices of $T_4$ as follows:
\[
    c_1= \left(\frac{975}{424}, -\frac{65}{212}, \frac{2275}{424}\right) = \frac{153}{19504}\cdot (-15, 35, 2) + \frac{1577}{4876}\cdot (-7, -8, 0) + \frac{13043}{19504}\cdot (7, 3, 8),
\]
\[
    c_2= \left(-\frac{91}{22}, -\frac{52}{11}, \frac{39}{22}\right)=\frac{21}{1012}\cdot (-15, 35, 2) + \frac{193}{253}\cdot (-7, -8, 0) + \frac{219}{1012}\cdot (7, 3, 8).
\]
Therefore, $c_1$ and $c_2$ are in $T_4$, implying that $T_4$ occludes both vertices of $T_1$ whose $z$-coordinate is positive. The edge of $T_1$ connecting them is also occluded, because $R_1$ is convex.

We conclude that all vertices and two edges of $T_1$ are occluded, and the same is true of $T_2$, $T_3$, and $T_4$, due to $\phi_2$. In total, the edges visible to $p$ are at most eight: at most two edges of $R_1$ and $R_2$, respectively, and at most one edge of $T_1$, $T_2$, $T_3$, and $T_4$, respectively. Thus, by \cref{cor:final}, exactly eight edges of $\mathcal P$ are visible to $p$.
\end{proof}

This completes the proof of \cref{thm:main2}.
To prove \cref{thm:main1}, it remains to show that there exists a polyhedron with the same visibility map as $\mathcal P=\{R_1,R_2,T_1,T_2,T_3,T_4\}$. This is achieved by a general procedure outlined in~\cite{vigliettaSOD}.
% \csaba{Where exactly is it outlined in~\cite{vigliettaSOD}? Can we give a more precise pointer?}
The following lemma shows how to augment $\mathcal{P}$ to a polyhedron $\mathcal{Q}$ with the same visibility map.
\begin{lemma}
Let $\mathcal{P}$ be a polygonal scene in $\mathbb{R}^3$ that consists of pairwise disjoint polygons, and let $p\in \mathbb{R}^3$ be a point in the convex hull of $\mathcal{P}$ that is disjoint from all polygons in $\mathcal{P}$. Then we can augment $\mathcal{P}$ to a polygonal scene $\mathcal{P}'$ such that $\mathcal{P'}$ forms the boundary of a polyhedron $\mathcal{Q}$, and $p$ has the same visibility map with respect to $\mathcal{P}$ and $\mathcal{Q}$.
\end{lemma}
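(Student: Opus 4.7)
The plan is to construct $\mathcal Q$ as the topological closure of the shadow region of $\mathcal P$ from $p$, clipped by a large bounding box. Choose a large closed convex polyhedron $B$ (for concreteness, an axis-aligned cube) containing $p$ and all polygons of $\mathcal P$ in its interior, and define
\[
\mathcal Q \;=\; \overline{\{q \in B : \text{the open segment } pq \text{ crosses some polygon of } \mathcal P\}}.
\]
Then $\mathcal Q$ is closed and bounded; it contains every polygon of $\mathcal P$ (since each $P_i$ is a limit of its own shadow); and $p \notin \mathcal Q$ (because $p$ cannot lie in its own shadow). The open complement $B \setminus \mathcal Q$ is precisely the set of points of $B$ visible from $p$ under the polygonal-scene rule, so $p$ will lie in the exterior of $\mathcal Q$.

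Next I would describe $\partial \mathcal Q$ and choose $\mathcal P'$ accordingly. Generically, $\partial \mathcal Q$ is a union of three kinds of planar polygonal pieces: (i) the polygons of $\mathcal P$ themselves, each sandwiched between the visible region (the exterior of $\mathcal Q$) and its own shadow (the interior of $\mathcal Q$); (ii) shadow regions on the faces of $B$, obtained by central projection of the polygons of $\mathcal P$ from $p$ onto $\partial B$; and (iii) \emph{shadow curtains}: for every edge $e$ of a polygon $P_i$ that bounds the shadow, the planar region lying in the plane through $p$ and $e$, bounded by $e$, by the two rays from $p$ through the endpoints of $e$, and by whatever first intercepts those rays (another polygon of $\mathcal P$ or a face of $B$). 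Set $\mathcal P'$ to be $\mathcal P$ together with the pieces of type (ii) and (iii), triangulated into simple polygons where necessary. Every face in $\mathcal P' \setminus \mathcal P$ lies entirely in the shadow of some $P_i$: the open segment from $p$ to any interior point of such a face crosses $P_i$. Consequently no new face, edge, or vertex of $\mathcal Q$ is visible from $p$ in the polyhedral sense, so the visible portions of edges from $p$ with respect to $\mathcal P$ and with respect to $\mathcal Q$ coincide, which is exactly the condition that their visibility maps agree.

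The main obstacle is verifying that $\mathcal Q$ is a bona fide polyhedron, i.e., a connected $3$-manifold with polygonal boundary. Degenerate configurations can threaten the manifold property: two polygons of $\mathcal P$ might have shadow cones sharing a bounding plane, two shadow curtains might coincide along a common ray from $p$, the shadow of a polygon on $\partial B$ might run along an edge of $B$, or the shadow of one polygon might fall exactly onto another polygon of $\mathcal P$. I would address this in two stages. First, choose $B$ in general position with respect to the finitely many planes containing polygons of $\mathcal P$ and the finitely many planes through $p$ and edges of $\mathcal P$. Second, perform an arbitrarily small inward perturbation of each shadow curtain and each face of type (ii), pushing them strictly into the interior of the shadow to separate them from the polygons of $\mathcal P$ and from one another; at each former pinch edge or pinch vertex, insert a small separating polygon lying in the shadow. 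Since every perturbed or inserted face remains strictly inside the shadow of some $P_i$, it is still invisible from $p$, and the visibility map is preserved. Connectedness of $\mathcal Q$ is automatic, because every polygon of $\mathcal P$ is joined to $\partial B$ through its own shadow.
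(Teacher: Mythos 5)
Your construction has two genuine gaps. The first is connectedness of $\mathcal Q$, which is not automatic. The closure of the shadow is a union of one truncated cone per polygon, and these cones can be pairwise disjoint: take two small parallel squares facing each other with $p$ between them, or three small triangles near the vertices of a large triangle with $p$ at its centroid --- $p$ lies in the convex hull, yet the shadow cones point in different directions and never meet. Each cone does reach $\partial B$, but in a region of $\partial B$ disjoint from the others, so ``joined to $\partial B$'' does not join the components to one another. Since the paper defines a polyhedron to be a \emph{connected} $3$-manifold with boundary, this is fatal as stated; and patching it is precisely the crux of the lemma, because any material added to connect two shadow components must pass through the region visible from $p$, so its boundary would be seen by $p$ unless its edges are hidden behind existing polygons. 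This is the key step of the paper's proof: for every face of the visibility map through which $p$ sees to infinity, it first inserts a new polygon tangent to a large sphere whose \emph{edges} project into already-occluded directions (so the visibility map is unchanged even though the polygon's interior is visible); once every direction from $p$ is blocked, the scene can be closed up into a single polyhedron by funnels behind each polygon and an enclosing spherical mesh, all invisible to $p$.

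The second gap concerns the curtains themselves. They lie in planes through $p$, and under the paper's visibility definition for polyhedra (the segment $pq$ need only avoid the \emph{interior} of $\mathcal Q$), the point $p$ sees every point of such a curtain by grazing along it: the segment from $p$ to a point $q$ of the curtain passes through the edge $e$ and then runs inside the curtain, i.e., inside $\partial\mathcal Q$, never entering $\mathrm{int}\,\mathcal Q$. Hence $p$ sees the far edge of each curtain (where it meets $\partial B$ or another obstruction) and the vertices there. So $\mathcal Q$ acquires vertices visible to $p$ and additional visible edges, which defeats the purpose of the lemma, namely transferring the tight eight-edge, no-visible-vertex example from polygonal scenes to polyhedra. (Whether the \emph{set of arcs} changes is beside the point --- the far edge of a curtain projects onto the same arc as $e$ --- but the application needs the visible-edge count and the absence of visible vertices to carry over.) The paper avoids this by attaching its closing surfaces strictly \emph{behind} each polygon, so that a ray from $p$ must cross the polygon's relative interior before reaching any new face, rather than along radial walls. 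Your proposed inward perturbation of the curtains points in the right direction but is not worked out: detaching a curtain from $e$ opens the surface, and any reattachment lets rays through $e$ escape past the shrunk shadow and see the far cap. The degeneracy and manifold issues you flag are comparatively minor, though note that they cannot be resolved by perturbing $\mathcal P$ or $p$, which are given.
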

\begin{proof}
Let $S$ be a sphere centered at $p$ that contains all polygons in $\mathcal{P}$. Consider the visibility map $\mathcal{A}$ of $p$ (generated by $\mathcal P$) on the sphere $S$. Note that $\mathcal{A}$ is an arrangement of noncrossing arcs of great circles. For example, if the polygonal scene $\mathcal P$ is the one in \cref{fig:7} and $p=(0,0,0)$, then, the visibility map $\mathcal{A}$ is (combinatorially equivalent to) the SOD in \cref{fig:3}.

Every face $F$ of $\mathcal{A}$ corresponds to a cone $C(F)$ with apex $p$, representing visibility rays emanating from $p$. The cone $C(F)$ either intersects a single polygon in $\mathcal{P}$ or does not intersect any polygon in $\mathcal{P}$ (i.e., for any point $q\in F$, the ray $\overrightarrow{pq}$ does not hit the interior of any polygon in $\mathcal{P}$). Let $\mathcal{F}$ be the set of faces of $\mathcal{A}$ of the second type, i.e., where $p$ does not see any polygon in $\mathcal{P}$. Since $p$ lies in the convex hull of $\mathcal{P}$, each face $F\in \mathcal{F}$ is strictly contained in some hemisphere $H(F)$. For every face $F\in \mathcal{F}$, successively create a new polygon $P(F)$ tangent to the sphere $S$ such that its orthographic projection to $S$ is contained in the hemisphere $H(F)$ and is disjoint from all other faces in $\mathcal{F}$.
Note that the polygonal scenes $\mathcal{P}$ and $\mathcal{P}\cup \{P(F): F\in \mathcal{F}\}$ have the same visibility map from $p$.

Let $S'$ be a sphere centered at $p$ that contains both $\mathcal{P}$ and polygons $P(F)$ for all $F\in \mathcal{F}$. We attach a ``funnel'' to each polygon $P$ in $\mathcal{P}\cup \{P(F): F\in \mathcal{F}\}$, on the opposite side from $p$, from the polygon $P$ to a pairwise disjoint ``holes'' in the sphere $S'$. To complete $\mathcal P'$, we connect the disjoint ``holes'' in $S'$ by a sufficiently fine mesh of $S'$ such that
the triangles are interior-disjoint from the funnels. Then the mesh and the funnels will bound a contractible polyhedron $\mathcal{Q}$. Since $p$ does not see any interior points of the polygons in the funnels and the mesh, the visibility map of $\mathcal{Q}$ is the same as that of $\mathcal{P}$.
\end{proof}

\section{Conclusions and Open Problems}\label{sec:conclusion}

We have proved that, given a polyhedron $\mathcal P$ in $\mathbb{R}^3$, any point $p\in \mathbb{R}^3$ sees at least six edges of $\mathcal P$. Moreover, if $p$ does not see any vertex of $\mathcal P$, it sees at least eight of its edges. Both bounds are tight, and the second one holds more generally for a polygonal scene $\mathcal P$ in $\mathbb{R}^3$. En route to these results, we also proved that any Spherical Occlusion Diagram (SOD) has at least four swirls and eight edges, and these bounds are tight, as well.

A possible direction for future research is extending our results to the number of visible \emph{faces} of polyhedra. Recall that, due to the Lusternik--Schnirelmann theorem, if a sphere is covered by three closed sets, one of the sets contains two antipodal points~\cite[pp.~118--119]{lust}. Applying this theorem to visibility maps, we immediately conclude that any interior point of a polyhedron sees at least four faces (because the projection of a face onto a sphere contains no antipodal points); a matching lower bound is trivially given by a tetrahedron. We conjecture that, if an interior point sees no vertices of a polyhedron, then it sees at least eight of its faces. A configuration matching this bound can be easily constructed from the arrangement in \cref{fig:7}.

Higher-dimensional generalizations of this problem can also be investigated: For example, given a polyhedron in $\mathbb{R}^d$, what is the minimum number of $\ell$-faces visible to a point that does not see any $k$-faces for $0\leq k<\ell<d$?

We believe that SODs are interesting objects in their own right, and will find more applications in discrete and computational geometry. Some open problems related to SODs are given in~\cite{vigliettaSOD}. Perhaps the most compelling issue is to find a simple characterization of the SODs that are not visibility maps of polyhedra. Such SODs are known to exist (see~\cite{KT22}), although they appear to be rare. An intriguing question is whether every SOD is \emph{combinatorially equivalent} to a visibility map of a polyhedron.

According to \cref{thm:swirls2}, any SOD has at least four swirls. Does it necessarily have two clockwise and two counterclockwise swirls? More generally, we ask for a characterization of the swirl graphs of SODs, as well as the swirl graphs of classes of SODs with certain properties (e.g., \emph{swirling SODs}, \emph{uniform SODs}, \emph{irreducible SODs}, and more~\cite{vigliettaSOD}).

\section*{Acknowledgments}
The authors are grateful to Joseph O'Rourke for insightful comments and suggestions that considerably improved the readability of this paper.
Research by T\'oth was partially supported by NSF DMS-1800734.
Research by Urrutia was partially supported by PAPIIT IN105221, Programa de Apoyo a la Investigaci\'on e Innovaci\'on Tecnol\'ogica UNAM.

\bibliographystyle{plainurl}
\bibliography{references}

\end{document}